\DeclareRobustCommand*\cal{\@fontswitch\relax\mathcal}
\newcommand{\Ag}{{\cal A}} 
\newcommand{\GA}{{\Gamma}} 
\def\solveApprox{SApp}
\def\ModelApprox {MApprox}
\newcommand{\varM}[1][]{V_{m_{_{#1}}}} 
\newcommand{\bitM}[1][]{v_{M_{_{#1}}}} 
\def\MAover {v_{M^{\Ag}_{over}}} 
\def\MGover {v_{M^{\GA}_{over}}} 
\def\MAunder {v_{M^{\Ag}_{under}}} 
\let\none = -
\def\F {{\rm{F}}}
\def\PV {{\mathcal{PV}}}
\def\G {{\rm G}}  
\def\X {{\rm X}}  
\def\F {{\rm F}}  
\def\CTL {{${\rm CTL}$}}
\def\ATL {{${\rm ATL}$}}
\newcommand{\coop}[1]{\langle\!\langle{#1}\rangle\!\rangle}
\newcommand{\set}[1]{\{{#1}\}}
\newcommand{\extended}[1]{}
\newcommand{\stan}{s} 
\newcommand{\States}{{\mathcal S}t}
\newcommand{\seq}{\pi}
\def\trans#1{\stackrel{#1}{\longrightarrow}} 
\newcommand{\prop}[1]{\ensuremath{\mathsf{{#1}}}}
\newcommand{\Next}[1][]{X_{{#1}}  
        \,}
\newcommand{\Always}[1][]{G_{{#1}}}
\newcommand{\Until}[1][]{U_{{#1}}}
\newcommand{\Epath}{\mathsf{E}}
\newcommand{\Apath}{\mathsf{A}}
\newcommand{\strat}{\sigma}
\newcommand{\CollStra}{\Sigma_{\GA}} 
\newcommand{\stratstyle}[1]{\ensuremath{\mathrm{#1}}}
\newcommand{\ir}{\stratstyle{ir}\xspace}
\newcommand{\iR}{\stratstyle{iR}\xspace}
\newcommand{\Ir}{\stratstyle{Ir}\xspace}
\newcommand{\IR}{\stratstyle{IR}\xspace}
\newcommand{\defeq}{\coloneqq}
\newcommand{\outcome}{\mathit{out}}
\newcommand{\grandn}{{\mathbb N}}
\newcommand{\propp}{\mathsf{p}}
\newcommand{\model}{M}
\newcommand{\satisf}[1][]{\models} 
\def\mk#1{\textcolor{blue}{#1}}
\def\wp#1{\textcolor{violet}{#1}}
\def\AG {AG}
\begin{document}

\title{SAT-Based ATL Satisfiability Checking}  



\author{Magdalena Kacprzak}
\affiliation{Bialystok University of Technology \city{Bialystok, Poland}}
\email{m.kacprzak@pb.edu.pl}

\author{Artur Niewiadomski}
\affiliation{\institution{Siedlce University, Faculty of Exact and
Natural Sciences} \city{Siedlce, Poland}}
\email{artur.niewiadomski@uph.edu.pl}

\author{Wojciech Penczek}
\affiliation{ICS PAS, Warsaw, Poland}
\email{penczek@ipipan.waw.pl}

\begin{abstract}
Synthesis of models and strategies is a very important problem in software engineering.
The main element here is checking the satisfiability of formulae expressing the specification
of a system to be implemented.
This paper puts forward a novel method for deciding the satisfiability of formulae
of Alternating-time Temporal Logic (ATL).
The method presented expands on one for CTL exploiting SAT Modulo Monotonic Theories solvers.
Similarly to the CTL case, our approach appears to be very efficient.
The experimental results show that we can quickly test the satisfiability of large ATL formulae
that have been out of reach of the existing approaches.
\end{abstract}

\keywords{ ATL; SAT-Based Satisfiability; Monotonic Theory}  

\maketitle


\section{Introduction}




The problem of synthesis is a very important issue in the rapidly-growing field of artificial intelligence
and modern software engineering \cite{Jones2012Group,Kouvaros2018,Schewe2007Distributed}.
The aim is to automatically develop highly innovative software, also for AI robots, chatbots or autonomous
self-driving vehicles.
The problem consists in finding a model satisfying a given property, provided the property is satisfiable.
Finally, the model is transformed into its correct implementation.


A convenient formalism to specify the game-like interaction between processes in distributed systems
is Alternating-Time Temporal Logic (ATL) \cite{ATL97,Bulling2015Logics}.
The interpretation of ATL formulae uses the paradigm of multi-agent systems and is defined in models
like concurrent game structures or interpreted systems.
This logic was introduced to reason about the strategic abilities of agents and their groups.
The strategic modalities allow for expressing the ability of agents to force their preferences
or to achieve a desired goal and are therefore suitable for describing properties like the existence of a winning strategy.
This is particularly important when we study properties and verify the correctness of security protocols or voting systems.
There are a lot of papers analysing different versions of
ATL \cite{Belardinelli2019,Dima11undecidable,Guelev11atl-distrknowldge,Schobbens04ATL,Bulling10verification,Bulling11mu-ijcai,Dima14mucalc,Jamroga2017,Dima15fallmu}
and other modal logics of strategic ability \cite{Chatterjee2010Strategy,Vardi2012Decidable,MogaveroMPV2012What}.
However, there is still a need for developing and introducing new and innovative techniques for solving synthesis
and satisfiability problems \cite{bloem2014sat,Bloem2012SynthesisOR,finkbeiner2013bounded,kupferman2005safraless,Pnueli:1989:Synthesis}.
This is because these problems are hard and their solutions require searching for effective practical algorithms.

\subsection{Contribution}

In this paper we:
\begin{itemize}
\item introduce a novel technique for checking ATL satisfiability, applying for the first time SAT Modulo Monotonic Theories solvers,


\item propose a method which is universal in the sense that it can be extended to different classes of multi-agent systems
      and ATL under different semantics, 

\item propose a method which allows for testing satisfiability in the class of models that meet given restrictions,

\item present a new efficient tool for checking satisfiability of ATL. 

\end{itemize}

\subsection{Related Work}

The complexity of the ATL satisfiability problem was proven to be EXPTIME-complete
by van Drimmelen \cite{van2003satisfiability,goranko2006complete} for a fixed number of agents,
and by Walther et al. \cite{walther2006atl} for systems without this assumption.
The satisfiability of ATL$^*$ was proved to be 2EXPTIME-complete \cite{schewe2008atl}.
A method for testing the satisfiability of ATL was developed by Goranko and Shkatov \cite{GorankoShkatov2009}.
Subsequently, this method was extended for checking ATL$^*$ \cite{david2015deciding} and ATEL \cite{Belardinelli2014Reasoning}.




In this paper we propose a solution to the first stage of the synthesis problem, which consists in finding a model
for a given ATL formula.
For this purpose, we adopt the method based on SAT Modulo Monotonic Theories (SMMT) \cite{Fast-flexible-CTL-2016}
used to search for models of the CTL formulae.
This technique was introduced by Bayless et al. in \cite{Bayless-sat-modulo-2015} for building efficient lazy SMT solvers
for Boolean monotonic theories.
Next, Klenze et al. in \cite{Fast-flexible-CTL-2016} presented how the SMMT framework can be used to build
an SMT solver for CTL model checking theory, and how to perform efficient and scalable CTL synthesis.

In this paper we go one step further by developing an SMMT solver for ATL formulae and show how to construct,
often minimal, models for them.
We compare the experimental results with the only implementation of the tool for testing ATL satisfiability described
in the literature \cite{david2015deciding}.
In that paper, unlike in our work, concurrent game structures were used as models for ATL* with perfect recall
and perfect knowledge semantics.

The main advantage of our framework consists in the promising preliminary experimental results and the fact that we can test
satisfiability in classes of models under given restrictions on the number of agents, their local states, transition functions,
local protocols, and valuation of variables.
Restrictions on the number of agents and their local states result directly from the finite model property for ATL \cite{goranko2006complete}.
In addition, it is possible to extend our approach to testing different classes of models and different types of strategies.

\subsection{Outline}

In Sec. \ref{section:MAS-and-ATL} 
we define a multi-agent system and its model, and give the syntax and semantics of ATL.
Sec. \ref{section:ATL-model-checking} defines Boolean monotonic theory for ATL.
In Sec. \ref{section:ATL-formula-approximation} the approximation algorithm is given and its properties are proved.
Sec. \ref{section:SATandSynthesis} introduces the algorithm for deciding ATL satisfiability and model construction.
Sec. \ref{section:results} presents experimental results.
Conclusions are in Sec. \ref{section:conclusions}.

\section{MAS and ATL}
\label{section:MAS-and-ATL}

Alur et al. introduced ATL logic taking into account different model compositions of open systems
like turn-based, synchronous, asynchronous, with fairness constraints or Moore game structures.
In this paper we follow Moore synchronous models \cite{ATL02}, i.e., assume that the state space
is the product of local state spaces, one for each agent, all agents proceed simultaneously,
and each agent chooses its next local state independently of the moves of the other players.
This is a restricted class of models, but it allows for 
the efficient testing of ATL satisfiability.

\subsection{Multi-agent System}
\label{subsection:MAS-and-IS}

We start with defining a multi-agent system following \cite{ATL02,Partial-order-reductions-2018}.


\begin{definition}
\label{def:MAS}
A \emph{multi-agent system (MAS)} consists of $n$ agents
${\cal A} = \set{1,\dots,n}$\footnote{The environment component may be added here with no technical difficulty.},
where each agent $i \in \cal A$ is associated with a 5-tuple
$\AG_i = \left( L_i, \iota_i, Act_i, P_i, T_i \right)$ including:

\begin{itemize}
\item a set of \emph{\extended{possible} local states}
$L_i=\{l_i^1, l_i^2,\dots,l_i^{n_i}\}$;

\item an \emph{initial local state} $\iota_i\in L_i$;

\item a set of \emph{local actions} $Act_i=\{\epsilon_i,a_i^1,a_i^2,\ldots,a_i^{m_i}\}$;


\item a \emph{local protocol} $P_i: L_i \to 2^{Act_i}$ which
selects the actions available at each local state; we assume that
$P_i(l_i) \neq \emptyset$ for every $l_i \in L_i$;

\item a (partial) \emph{local transition function} $T_i: L_i
\times Act_i \rightarrow L_i$ such that $T_i(l_i,a)$ is defined
iff $a\in P_i(l_i)$ and $T_i(l_i,\epsilon_i) = l_i$
whenever $\epsilon_i \in P_i(l_i)$ for each $l_i \in L_i$.

\end{itemize}
\end{definition}

An example MAS specification 
is depicted in Figure~\ref{fig:ex1},
where
${\cal A} = \set{1,2}$,
$\AG_1 = \left(\set{l_1^1, l_1^2, l_1^3}, l_1^1, \set{\epsilon_1, a_1^1, a_1^2, a_1^3}, P_1, T_1 \right)$,\\
$P_1 = \left\{ \left(l_1^1, \{ a_1^1, a_1^3\}\right),\left(l_1^2, \{ a_1^2\}\right),\left(l_1^3, \{a_1^2, a_1^3\}\right) \right\}$,
$T_1 = \bigl\{$
    $\bigl( (l_1^1, a_1^1), l_1^1 \bigr)$,
    $\bigl( (l_1^1, a_1^3), l_1^3 \bigr)$,
    $\bigl( (l_1^2, a_1^2), l_1^2 \bigr)$,
    $\bigl( (l_1^3, a_1^2), l_1^2 \bigr)$,
    $\bigl( (l_1^3, a_1^3), l_1^3 \bigr)$
$\bigr\}$,\\
$\AG_2 \!=\! \left(\!\set{l_2^1, l_2^2}, l_2^1,\! \set{\epsilon_2, a_2^1, a_2^2}, P_2, T_2 \!\right)$,
$P_2 = \bigl\{\! \left(l_2^1, \{ a_2^1, a_2^2\}\right)$,
$\left(l_2^2, \{ a_2^2\}\right)\! \bigr\}$,
$T_2 = \bigl\{$
    $\bigl( (l_2^1, a_2^1), l_2^1 \bigr)$,
    $\bigl( (l_2^1, a_2^2), l_2^2 \bigr)$,
    $\bigl( (l_2^2, a_2^2), l_2^2 \bigr)$
    $\bigr\}$.

\begin{figure}
    \centering
        \includegraphics[width=.375\columnwidth,clip=true,trim=5.25cm 16.15cm 12.3cm 6.1cm]{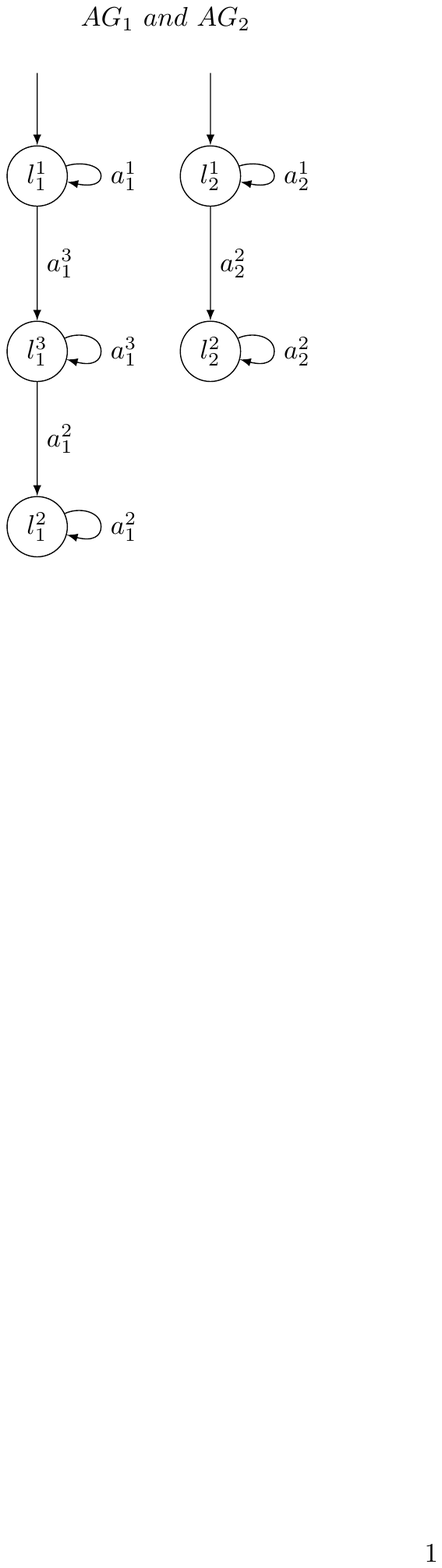}%
    \caption{Visualization of an example MAS specification for ${\cal A}=\{1,2\}$. On the left $\AG_1$, on the right $\AG_2$.}
    \label{fig:ex1}
\end{figure}

In our approach we consider \textit{synchronous} multi-agent systems, i.e.,
systems in which each \textit{global action} is a $n$-tuple $(a^1,\dots,a^n)$,
where $a^i \in Act_i$, i.e., each agent performs one local action.
Define the set of all global actions as $Act = Act_1 \times \dots \times Act_n$.

In order to describe the interaction between agents, the model for $MAS$ is defined formally below.

\begin{definition}[Model]
\label{def:IS}


Let $\PV$ be a set of propositional variables and $MAS$ be a multi-agent system with $n$ agents.
An (induced) \emph{model}, is a 4-tuple $M=(\States,\iota,T,V)$ with

\begin{itemize}
\item the set $\States = L_1\times\dots\times L_n$ of the global\textit{
states},

\item an \textit{initial state} $\iota=(\iota_1,\dots,\iota_n) \in
\States$,

\item the \textit{global transition function} $T : \States\times
Act
\rightarrow \States$, such that $T(\stan_1,a)= \stan_2$ iff 
$T_i(\stan_1^i,a^i) = \stan^i_2$ for all $i \in {\mathcal A}$, where for global state
$\stan = (l_1, \dots, l_n)$ we denote the local component of agent $i$ by $\stan^i=l_i$
and for a global action $a = (a^1, \dots, a^n)$ we denote the local action
of agent $i$ by $a^i$;

\item a valuation of the propositional variables $V: \States
\rightarrow 2^{\PV}$.
\end{itemize}

\end{definition}

\begin{figure}[tb]
    \centering
        \includegraphics[width=\columnwidth,clip=true,trim=5.25cm 10.4cm 4.7cm 5.9cm]{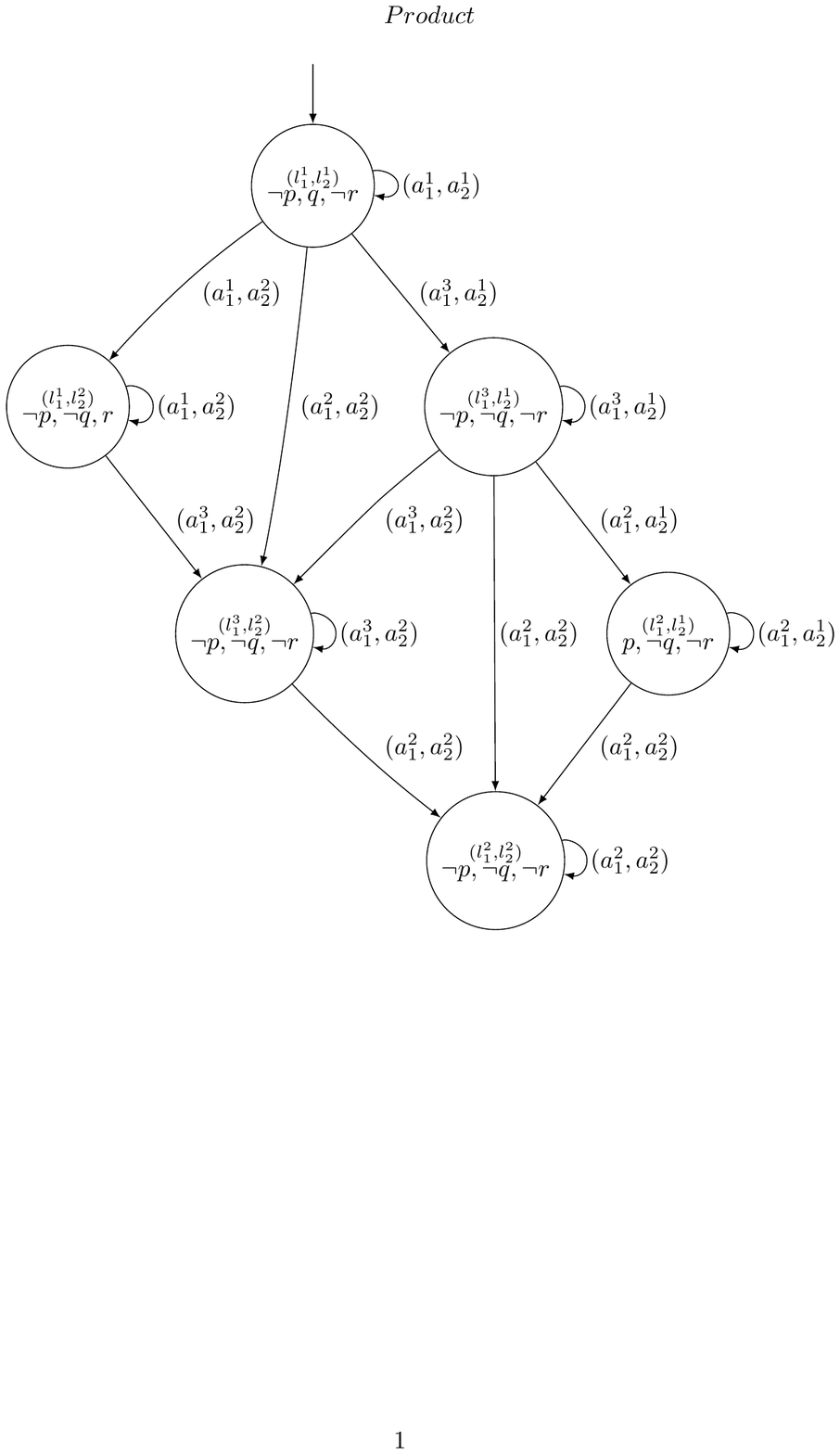}%
    \caption{The model for MAS specification of Fig.~\ref{fig:ex1} and $\PV=\{p,q,r\}$.}
    \label{fig:ex2}
\end{figure}

We
say that action $a \in Act$ is \emph{enabled} at $\stan\in
\States$ if $T(\stan, a) = \stan'$
for some $\stan' \in \States$.
We assume that at
each $\stan\in \States$ there exists at least one enabled action, i.e.,
for all $\stan \in \States$ exist $a \in Act$, $\stan' \in \States$, such that $T(\stan, a) = \stan'$.
%
An infinite sequence of global states and actions $\seq = \stan_0
a_0 \stan_1 a_1 \stan_2\dots$ is called a\extended{n(interleaved)}
\emph{path} \extended{if there is a sequence of
global transitions from $\stan_0$ onwards, i.e., }if
$T(\stan_i, a_i) = \stan_{i+1}$
for every $i \geq 0$.
Let $Act(\seq) = a_0a_1a_2\ldots$ be the sequence of actions in $\seq$,
and $\seq[i] = \stan_i$ be the $i$-th global state of $\seq$.
$\Pi_M(\stan)$ denotes the set of all paths in $M$ starting at
$\stan$.

\subsection{Alternating-time Temporal Logic}
\label{subsection:ATL}

\emph{Alternating-time temporal logic}, \ATL~
\cite{ATL97,ATL98,ATL02} generalizes the bran-ching-time temporal logic
\CTL~\cite{Clarke81ctl} by replacing the path quantifiers
$\Epath,\Apath$ with \emph{strategic modalities} $\coop{\GA}$.
Informally, $\coop{\GA}\gamma$ expresses that the group of agents
$\GA$ has a collective strategy to enforce the temporal property
$\gamma$. The formulae make use of temporal operators: ``$\Next$''
(``next''), ``$\Always$'' (``always from now on''), $\Until$
(``strong until'').


\begin{definition}[Syntax of \ATL]
In vanilla \ATL, every occurrence of a strategic modality is
immediately followed by a temporal operator. Formally, the
language of \ATL\ is defined by the following grammar:
%
$\varphi ::= \prop{p}  \mid  \lnot \varphi  \mid
\varphi\land\varphi  \mid
  \coop{\GA} \Next\varphi  \mid  \coop{\GA} \varphi\Until\varphi \mid \coop{\GA} \Always \varphi$.
\end{definition}
Let $M$ be a model. A \emph{strategy} of agent $i \in {\mathcal
A}$ in $M$ is a conditional plan that specifies what $i$ is going
to do in any potential situation. \extended{A number of semantic
variations are possible.}

In this paper we focus on memoryless perfect information
strategies. Formally, a \emph{memoryless perfect information
strategy for agent $i$} is a function $\strat_i \colon \States \to
Act_i$ st. $\strat_i(\stan) \in P_i(\stan^i)$ for each\extended{
global state} $\stan \in \States$.

A \emph{joint strategy} $\strat_{\GA}$ for a coalition $\GA
\subseteq {\mathcal A}$ is a tuple of strategies, one per agent $i
\in \GA$. We denote the set of $\GA$'s collective memoryless
perfect information strategies by $\CollStra$.

Additionally, let $\strat_{\GA} = (\strat_1,\ldots,\strat_k)$ be a
joint strategy for $\GA = \{i_1,\ldots,i_k\}$. For each $\stan
\in \States$, we define $\strat_{\GA}(\stan) \defeq
(\strat_1(\stan),\ldots,\strat_k(\stan))$.


\begin{definition}[Outcome paths]\label{def:outcome}
The \emph{outcome} of strategy $\strat_{\GA} \in \CollStra$ in
state $\stan \in \States$ is the set
$\outcome_M(\stan,\strat_{\GA}) \subseteq \Pi_M(\stan)$ s. t.
$\seq = \stan_0 a_0 \stan_1 a_1 \dots \in
\outcome_M(\stan,\strat_{\GA})$ iff $\stan_0 = \stan$ and $\forall
i \in \grandn$
$\forall j \in {\GA}$, $a_i^j = \strat_j(\seq[i])$.
\end{definition}
%
%
Intuitively, the outcome of a joint strategy $\strat_{\GA}$ in a global state $\stan$ is the set of all the infinite paths that can occur when
in each state of the paths agents (an agent) in
${\GA}$ execute(s) an action according to $\strat_{\GA}$ and
agents (an agent) in ${\mathcal A} \setminus {\GA}$ execute(s) an
action following their protocols.

The semantics of \ATL\
is defined as follows:

\begin{description}
\item[{$\model,\stan \satisf[Y] \propp$}] iff  $\propp \in
V(\stan)$, for $\propp \in \PV$;

\item[{$\model,\stan \satisf[Y] \neg\varphi$}] iff $\model,\stan
\not\satisf[Y]  \varphi$;

\item[{$\model,\stan \satisf[Y] \varphi_1\land\varphi_2$}]
  iff $\model,\stan \satisf[Y] \varphi_1$ and $\model,\stan \satisf[Y] \varphi_2$;

\smallskip
\item[{$\model,\stan \satisf[Y] \coop{{\GA}}\Next \varphi$}] iff
there is a strategy $\strat_{\GA}\in \CollStra$ such that

$\outcome_\model(\stan,\strat_{\GA}) \neq \emptyset$ and, for each
path $\seq\in \outcome_\model(\stan,\strat_{\GA})$, we have
$\model,\seq \satisf[Y] \Next \varphi$, i.e., $\model,\seq[1]
\satisf[Y] \varphi$;

\smallskip
\item[{$\model,\stan \satisf[Y] \coop{{\GA}} \varphi_1 \Until
\varphi_2$}] iff there is a strategy
$\strat_{\GA}\in \CollStra$ such that

$\outcome_\model(\stan,\strat_{\GA}) \neq \emptyset$ and, for each
path $\seq\in \outcome_\model(\stan,\strat_{\GA})$, we have
$\model,\seq \satisf[Y] \varphi_1 \Until \varphi_2$, i.e.,
$\model,\seq[i]\ \satisf[Y]\ \varphi_2$ for some $i\ge 0$ and
$\model, \seq[j] \satisf[Y] \varphi_1$ for all $0\leq j< i$;

\smallskip
\item[{$\model,\stan \satisf[Y] \coop{{\GA}} \Always \varphi$}]
iff there is a strategy $\strat_{\GA}\in \CollStra$ such that

$\outcome_\model(\stan,\strat_{\GA}) \neq \emptyset$ and, for each
path $\seq\in \outcome_\model(\stan,\strat_{\GA})$, we have
$\model,\seq \satisf[Y] \Always \varphi$, i.e.,
$\model,\seq[i] \satisf[Y] \varphi$, for every $i \geq 0$.
\end{description}
We omit the $\model$ symbol if it is clear which model is intended.

\begin{definition}{\bf (Validity) }$\;$
An \ATL\ formula $\varphi$ is valid in $M$ (denoted $M \models
\varphi$) iff $M, \iota \models \varphi$, i.e., $\varphi$ is true
at the initial state of the model $M$.
\end{definition}
An example ATL formula, which is satisfied by the model depicted in Figure~\ref{fig:ex2},
is as follows:
%
$\coop{1,2} \F (p \land \neg q \land \neg r)  \land \coop{1} \F ( \neg p \land q \land \neg r) \land \coop{1,2} \X (\neg p \land \neg q \land r)$,
where $\coop{\GA} \F \alpha$ is a short for $\coop{\GA}(true \Until \alpha)$.

\section{Boolean Monotonic Theory for ATL}
\label{section:ATL-model-checking}

In this section we show how to construct a Boolean monotonic theory for ATL,
which allows for building a lazy SMT solver \cite{Bayless-sat-modulo-2015} for ATL.
The resulting tool, a SAT modulo ATL solver, can be used for testing the satisfiability
of the ATL formulae as well as for performing efficient and scalable synthesis.

\subsection{Boolean Monotonic Theory}
\label{subsection:boolean-monothonic-theory}

Consider a predicate $P: \{0, 1\}^n \mapsto \{0, 1\}$.
We say that $P$ is Boolean \textit{positive monotonic}
iff $P(s_1,\dots,s_{i-1}, 0, s_{i+1},\dots,s_n) = 1$ implies
    $P(s_1,\dots,s_{i-1}, 1, s_{i+1},\dots,s_n) = 1$, for all $1\leq i \leq n$.
$P$ is called Boolean \textit{negative monotonic}
iff $P(s_1,\dots,s_{i-1}, 1, s_{i+1},\dots,s_n) = 1$ implies
    $P(s_1,\dots,s_{i-1}, 0, s_{i+1},\dots,s_n) = 1$, for all $1\leq i \leq n$.

The definition of (positive and negative Boolean) monotonicity for a function
$F: \{0, 1\}^n \mapsto 2^S$ (for some set $S$) is analogous.
$F$ is Boolean \textit{positive monotonic} iff
$F(s_1,\dots,s_{i-1}, 0, s_{i+1},\dots,s_n) \subseteq
 F(s_1,\dots,s_{i-1}, 1, s_{i+1},\dots,s_n)$, for all $1 \leq i \leq n$.
A function $F$ is Boolean \textit{negative monotonic} iff
$F(s_1,\dots,s_{i-1}, 1, s_{i+1},\dots,s_n) \subseteq
 F(s_1,\dots,s_{i-1}, 0, s_{i+1},\dots,s_n)$, for all $1 \leq i \leq n$.
In what follows we refer to Boolean monotonicity simply as to monotonicity.

\begin{definition}[Boolean Monotonic Theory]
\label{def:monotonic-theory}

A theory $T$ with a signature $\Omega = (S,S_f,S_r,ar)$, where
$S$ is a non-empty set of elements called sorts or types,
$S_f$ is a set of function symbols,
$S_r$ is a set of relation symbols, and
$ar$ is arity of the relation and function symbols,
is (Boolean) monotonic iff:

\begin{enumerate}
\item the only sort in $\Omega$ is Boolean;

\item all predicates and functions in $\Omega$ are monotonic.
\end{enumerate}
\end{definition}


%
The authors of \cite{Bayless-sat-modulo-2015} introduced techniques for building
an efficient SMT solver for Boolean monotonic theories (SMMT).
These techniques were further used for checking satisfiability of CTL \cite{Fast-flexible-CTL-2016}.
In this paper, we extend this approach to ATL.
We start with showing a Boolean encoding of the ATL models.

\subsection{Boolean Encoding of ATL Models}
\label{subsection:model-coding}

First, we make some assumptions about MAS. Assume that we are
given a set of agents ${\cal A}=\{1,\dots,n\}$, where each agent
$i \in {\cal A}$ has a fixed set of the local states
$L_i=\{l^1_i,\dots,l^{n_i}_{i}\}$ and a fixed initial local state
$\iota_i \in L_i$.
%
Since agent $i$ can be in one of its $n_i$ local states,
and a local transition function $T_i$ is restricted such that it does not involve
actions of the other agents,
we can assume, without a loss of generality, that agent $i$ has exactly $n_i$
possible actions, i.e., from each local state it can potentially move to each
of its local states.
%
So, assume that the set of local actions for agent $i$ is
$Act_i=\{a^1_i,\dots,a^{n_i}_i\}$ and an action $a^j_i$ can move
the agent $i$ from any local state to local state $l^j_i$.
Moreover, we assume that each local protocol $P_i$ satisfies that
at least one action is available at each local state.
Consequently, the local transition function $T_i$ for agent $i$ is
defined as follows: $T_i(l^k_i,a^j_i)=l^j_i$ if $a^j_i \in
P_i(l^k_i)$, for any $l^k_i \in L_i$ and $1 \leq j \leq n_i$.

Next, we represent every single agent $i$ with a given $\AG_i
=(L_i, \iota_i, Act_i, P_i, T_i)$ by means of a bit vector. In
fact, under the condition that the number of the local states is
fixed, the initial state is selected, and the rules for defining
the local actions and a local transition function are given, we
have to encode a local protocol $P_i$. It can be defined by a
Boolean table $lp_i$ of $|L_i| \times |Act_i|$ entries, where $0$
at position $(l^k_i,a^j_i)$ means that the local action $a^j_i$ is
not available at the local state $l^k_i$, and $1$ stands for the
availability. This table can be represented by a bit vector
$tb_i = (lp_i[1],\dots,lp_i[n_i])$\footnote{In what follows, we assume that
a sequence of bit vectors is identified with the bit vector composed of its elements.},
where $lp_i[j]$ stands for the $j$-th row of the table $lp_i$,
encoding which local actions are available at which local states.

Since the model $M=(\States,\iota,T,V)$ induced by a MAS
is a product of $\AG_i$ for $i \in {\cal A}$,
the bit vector
$(tb_1,\ldots,tb_n)$
determines the synchronous product of the local transition functions of
the agents and thus the global transition function $T$ of $M$.

Finally, we need to define a valuation of the propositional variables.
Given a set $\PV$, a Boolean table of size $|\States| \times |\PV|$
saves which propositional variables are true in which global states.
Then, let $vb=(vb_1,\dots,vb_{k})$ be a bit vector, where
$k=|\States|\cdot|\PV|$,
controlling which propositional variables hold in each global
state.

In this way, every model can be represented with a bit vector.
For a fixed number $|\PV|$ of the propositional variables,
a fixed number $n$ of agents,
a fixed number $n_i$ of the local states of agent $i$,
for every $i=1,\dots,n$,
the bit vector
$v_M = (tb_1,\ldots,tb_n,vb)$ encodes some model induced by MAS
without an initial state fixed. Therefore, $v_M$ actually encodes
a family of models which differ only in the initial state.

\subsection{Predicate \textit{Model}}
\label{subsection:predicate-model}

From now on, we consider models $M$ defined over the fixed number $|\PV|$ of the propositional variables
and a fixed number $n$ of agents with fixed numbers $|L_1|,\dots,|L_n|$ of local states.
Thus, we consider models that can be represented by a bit vector $v_M$ consisting of exactly
$n_M=|L_1|^2+$ $\dots$ $+|L_n|^2+|L_1|\cdot$ $\dots$ $\cdot|L_n|\cdot|\PV|$ bits.
In the rest of the work we will use the following notation:
$$\varM=(TB_1,\dots,TB_n,VB)$$
to denote a vector of Boolean variables, where for $i=1,\dots,n$, $TB_i$ is a vector of $|L_i|^2$
variables and $VB$ is a vector of $|L_1|\cdot\ldots\cdot|L_n|\cdot|\PV|$ variables.

For an ATL formula $\phi$ defined over propositional variables of $\PV$
and over agents of ${\cal A}$,
for each global state $g \in \States$ the following predicate is defined:
$Model_{g,\phi}(\varM)$.
%
%
For the bit vector $v_M$ encoding a model $M$
we define:
$Model_{g,\phi}(v_M)=1$ if and only if $M,g \models \phi$.
Unfortunately, it turns out that this predicate is not
monotonic, i.e. there is an ATL formula $\phi$ and a global state
$g$ for which the predicate $Model_{g,\phi}(\varM)$ is not monotonic w.r.t. $\varM$.
%

\begin{theorem}
\label{theorem:non-monotonicity-of-ATL}
The predicate $Model_{g,\phi}(\varM)$ is neither positive nor negative
monotonic w.r.t $\varM$.
\end{theorem}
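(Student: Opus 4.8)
The plan is to exhibit, for a suitably small fixed profile of agents and local states, two concrete bit vectors $\varM$ and $\varM'$ that differ in exactly one coordinate, together with a formula $\phi$ and a global state $g$, witnessing a failure of monotonicity in \emph{each} direction. Since both directions must fail, it is cleanest to produce two separate gadgets (or one gadget used twice with different target formulae). First I would fix the simplest nontrivial setting: $n=1$ or $n=2$ agents, each with two local states, and a single propositional variable $\prop{p}$; this keeps $\varM$ short enough to reason about by hand. The single coordinate that will be flipped should be an entry of some $lp_i$, i.e.\ a bit controlling whether one local action is available at one local state; flipping it from $0$ to $1$ \emph{adds} an outgoing transition in the product model, which is exactly the kind of change whose effect on satisfaction of a strategic formula is non-monotone.

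For the failure of \emph{positive} monotonicity, the idea is to use a formula of the form $\coop{\GA}\Always\varphi$ (or $\coop{\GA}\varphi_1\Until\varphi_2$ with a ``safety'' flavour) whose truth requires that the coalition can \emph{avoid} certain states. Adding an edge (flipping a protocol bit $0\to1$) can only give the opponents — or even the coalition itself, through the requirement that the chosen action be in $P_i$ — a new move that escapes the safe region, so $Model_{g,\phi}=1$ at $\varM$ but $Model_{g,\phi}=0$ at the larger $\varM'$. Concretely I would arrange that at $\varM$ the coalition has a strategy keeping $\prop{p}$ true forever (the only enabled actions loop inside $\{g\}$), while at $\varM'$ a newly enabled action of a non-coalition agent forces a transition to a state where $\prop{p}$ is false, destroying every outcome's invariance. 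For the failure of \emph{negative} monotonicity, I would dually use a reachability formula $\coop{\GA}\F\,\varphi$ (i.e.\ $\coop{\GA}(true\,\Until\,\varphi)$): at the \emph{larger} vector $\varM'$ the extra edge is precisely the one the coalition needs to reach a $\varphi$-state, so $Model_{g,\phi}(\varM')=1$ while $Model_{g,\phi}(\varM)=0$ because at $\varM$ that state is unreachable. One subtlety to handle is the standing assumption that every local state has at least one available action and every global state has an enabled action: I must choose the base vector $\varM$ so that it already satisfies these constraints, so that flipping one bit stays within the legal family of models; using $\epsilon$-style self-loops (guaranteed by the transition-function conditions) as the ``always present'' action takes care of this.

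The key steps, in order: (i) fix the agent profile and $\PV$, and name the coordinates of $\varM$ explicitly; (ii) describe the model $M$ encoded by a chosen base vector $v_M$, verifying it is a legal MAS-induced model and computing its paths from $g$; (iii) describe the model $M'$ obtained by flipping one designated protocol bit $0\to1$, and its paths from $g$; (iv) pick $\phi_{pos}=\coop{\GA}\Always\,\prop{p}$ and check $M,g\models\phi_{pos}$ but $M',g\not\models\phi_{pos}$, establishing non-positive-monotonicity; (v) pick $\phi_{neg}=\coop{\GA}\F\,\neg\prop{p}$ (or reuse $M,M'$ with roles of the flipped bit and a fresh valuation so that the reachable-only-after-adding-an-edge phenomenon appears), and check $M',g\models\phi_{neg}$ but $M,g\not\models\phi_{neg}$. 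I expect the only real obstacle to be bookkeeping: making sure the single flipped bit genuinely changes the global transition relation in the intended way (since $T$ is the synchronous product, a change in one agent's protocol can interact with the other agents' available actions), and making sure the ``before'' model is already a valid member of the fixed family so that monotonicity is even meant to be asserted there. Once the gadget is laid out, the semantic checks for $\Always$ and $\F$ are immediate from the definitions of outcome and of $\satisf[Y]$ given above.
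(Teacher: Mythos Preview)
Your proposal is correct, but the paper's proof is a single sentence: since \ATL\ subsumes \CTL, the result follows immediately from the already-established non-monotonicity of the analogous predicate for \CTL\ in~\cite{Fast-flexible-CTL-2016}. No new gadget is built; the counterexamples for \CTL\ are inherited verbatim because every \CTL\ formula (via $\coop{\emptyset}$ and $\coop{\Ag}$) is an \ATL\ formula over the same encoded models.

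Your route is genuinely different: you construct explicit witnesses directly at the \ATL\ level, using proper coalitions $\GA\subsetneq\Ag$ and flipping a single protocol bit so that a safety objective $\coop{\GA}\Always\prop{p}$ is destroyed by a new opponent move (failure of positive monotonicity) and a reachability objective $\coop{\GA}\F\neg\prop{p}$ is enabled by a new coalition move (failure of negative monotonicity). This is sound and self-contained, and it nicely foreshadows the later Theorems~\ref{theorem:TBi-positive-monotonicity-of-AX-AG-AU} and~\ref{theorem:TBi-negative-monotonicity-of-AX-AG-AU} by illustrating \emph{why} the direction of monotonicity in $TB_i$ depends on whether $i\in\GA$. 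The cost is length and bookkeeping (you yourself flag the need to keep every row of each $lp_i$ nonempty under the encoding of Section~\ref{subsection:model-coding}, where the $\epsilon$-action is absorbed into the $n_i$ canonical actions). One small slip: your parenthetical that adding an edge ``even [to] the coalition itself'' can break $\coop{\GA}\Always\prop{p}$ is not right---a coalition agent can simply ignore a newly available action---but you immediately restrict the construction to a non-coalition agent, which is what is actually needed. The paper's approach buys brevity by outsourcing the example; yours buys independence from the \CTL\ reference and a more informative illustration of the strategic phenomenon.
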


\begin{proof}
Since ATL subsumes CTL, the thesis follows from the similar result for CTL \cite{Fast-flexible-CTL-2016}.
\end{proof}

However, in some special cases, as we show below, the predicate $Model_{g,\phi}(\varM)$ can be monotonic.

\begin{theorem}
\label{theorem:VB-positive-monotonicity-of-p-and-AX-AG-AU}

The predicate $Model_{g,\phi}(\varM)$ is positive monotonic w.r.t.
$VB$ if $\phi \in \{p, p\wedge q,\coop{{\GA}}\Next p, \coop{{\GA}}\Always p, \coop{{\GA}} p\Until q\}$,
where $p, q \in \PV$, $\GA \subseteq {\Ag}$.
\end{theorem}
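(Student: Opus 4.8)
The plan is to exploit the fact that flipping one bit of $VB$ from $0$ to $1$ changes neither the state space, the initial state, nor the global transition function of the encoded model, but only enlarges the valuation. Concretely, let $v_M$ and $v_M'$ be two bit vectors that agree on the $TB_1,\dots,TB_n$ components and on every coordinate of $VB$ except one, where $v_M$ has $0$ and $v_M'$ has $1$. Write $M=(\States,\iota,T,V)$ and $M'=(\States,\iota,T,V')$ for the induced models. By the encoding of Sec.~\ref{subsection:model-coding}, $M$ and $M'$ have the same $\States$ and the same $T$, and $V(\stan)\subseteq V'(\stan)$ for every $\stan\in\States$ (in fact they differ at exactly one state, and there by exactly one variable). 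I would first record this observation precisely, since it reduces the claim to: $M,g\models\phi$ implies $M',g\models\phi$.

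Next I would note the two structural consequences of $T$ being unchanged. First, $\Pi_M(\stan)=\Pi_{M'}(\stan)$ for every $\stan$, since a path is determined purely by $T$. Second, the set $\CollStra$ of collective memoryless perfect information strategies of $\GA$ is the same in $M$ and $M'$ — a strategy is a map $\States\to Act_i$ constrained by the protocols, and both $\States$ and the protocols are encoded outside $VB$ — and for every $\strat_{\GA}\in\CollStra$ and every $\stan$ we have $\outcome_M(\stan,\strat_{\GA})=\outcome_{M'}(\stan,\strat_{\GA})$; in particular one outcome is nonempty iff the other is.

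Then I would do the short case analysis on $\phi$, assuming $Model_{g,\phi}(v_M)=1$, i.e. $M,g\models\phi$, and deriving $M',g\models\phi$. For $\phi=p$ and $\phi=p\wedge q$ this is immediate from $V(g)\subseteq V'(g)$. For $\phi=\coop{\GA}\Next p$ I pick the witnessing strategy $\strat_{\GA}$; its outcome in $M'$ is the same nonempty set of paths, and for each such path $\seq$ we have $p\in V(\seq[1])\subseteq V'(\seq[1])$, so the same strategy witnesses $M',g\models\coop{\GA}\Next p$. The cases $\coop{\GA}\Always p$ and $\coop{\GA}\,p\Until q$ are analogous: the same witnessing strategy works, because the relevant membership facts $p\in V(\seq[i])$ (resp. $q\in V(\seq[i])$ and $p\in V(\seq[j])$ for $0\le j<i$) are all of the form ``a variable is true at a state'', hence preserved when passing from $V$ to the larger $V'$, and the index $i$ witnessing the Until is reused verbatim.

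Finally, I would remark that there is essentially no obstacle here: the only point requiring a moment's care is that the nonemptiness condition $\outcome_{\model}(\stan,\strat_{\GA})\neq\emptyset$ appearing in the semantics is preserved, which is clear since the transition structure is identical. The reason the statement is restricted to this particular list of formulas — rather than to all negation-free ATL formulas, which would be positive monotonic w.r.t. $VB$ by the very same argument — is simply that these are the formulas needed for the theory construction; the proof above extends to the full positive fragment by a routine induction on formula structure.
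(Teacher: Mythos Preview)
Your proposal is correct and follows essentially the same approach as the paper's proof: both argue that changing a single $VB$ bit from $0$ to $1$ leaves the transition structure intact and only enlarges the valuation, so the same witnessing strategy works and each path in its (unchanged) outcome still satisfies the temporal subgoal. Your version is in fact slightly more explicit than the paper's---you spell out that $\outcome_M(\stan,\strat_{\GA})=\outcome_{M'}(\stan,\strat_{\GA})$ and that the nonemptiness side condition is preserved---whereas the paper speaks more loosely of ``the corresponding path'' possibly containing more states where $p$ or $q$ holds; but this is a difference of presentation, not of substance.
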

\begin{proof}
Let $\phi \!\in \! \{p, p\wedge q, \coop{{\GA}}\Next p, \coop{{\GA}}\Always p, \coop{{\GA}} p \Until q\}$,
where $p, q \in \PV$ and let $v_M$ be a bit vector such that $Model_{g,\phi}(v_M)=1$.
This means that $M,g \models \phi$, for $M$ encoded by $v_M$, where $g$ is an initial state of $M$.

Now, let $v_{M'}$ be a vector which differs from $v_M$ only in one value $vb_j$,
for $j=1,\dots,k$, which is $0$ in $v_M$ and $1$ in $v_{M'}$.
The model $M'$, encoded by $v_{M'}$, has the same states, transitions, and state properties
as $M$, except for one state property which holds in $M'$ but not in $M$, i.e.,
one propositional variable holds true in some state $t$ in $M'$ but does not hold in $t$ in $M$.
Thus, if $\phi \in \{p,p\wedge q\}$ and $M,g \models \phi$, then $M',g \models \phi$ as well.

Consider the case of $\phi \in \{\coop{{\GA}}\Next p, \coop{{\GA}}\Always p, \coop{{\GA}} p \Until q\}$.
Since $M$ is a model of $\phi$, then there is a strategy $\strat_{\GA}\in \CollStra$ such that
for each path $\seq \in \outcome_\model(\stan,\strat_{\GA})$, $\seq \models \psi$
for $\psi \in \{\Next p, \Always p, p \Until q\}$.
Clearly, there is the same strategy $\strat_{\GA}\in \CollStra$ in $M'$.
Consider a path $\seq' \in \outcome_{\model'}(\stan,\strat_{\GA})$.
This path differs from the corresponding path $\seq \in \outcome_\model(\stan,\strat_{\GA})$
such that it may contain more states where $p$ or $q$ holds.
Therefore, $\seq' \models \psi$ for  $\psi \in \{\Next p, \Always p, p \Until q\}$.
So, we have $Model_{g,\phi}(v_{M'})=1$.
\end{proof}

\begin{theorem}
\label{theorem:VB-negative-monotonicity-of-negation}

The predicate $Model_{g, \neg p}(\varM)$ is negative monotonic
w.r.t. $VB$, for $p \in \PV$.
\end{theorem}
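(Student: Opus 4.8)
The plan is to mirror the (easy direction of the) argument already used in Theorem~\ref{theorem:VB-positive-monotonicity-of-p-and-AX-AG-AU}, but tracking the flip in the valuation. Let $v_M$ be a bit vector with $Model_{g,\neg p}(v_M)=1$, i.e.\ for the model $M$ encoded by $v_M$ (with $g$ its initial state) we have $M,g\models\neg p$, that is $p\notin V(g)$. Let $v_{M'}$ differ from $v_M$ in exactly one bit $vb_j$ of $VB$, which is $1$ in $v_M$ and $0$ in $v_{M'}$; this is precisely the negative-monotonicity step, since we are lowering a $VB$-bit. The model $M'$ encoded by $v_{M'}$ has the same states, the same transition function (the $TB_i$ parts are untouched), and the same valuation except that one propositional variable ceases to hold in one global state $t$.

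First I would observe that the only state relevant to the truth of $\neg p$ at $g$ is $g$ itself: by the semantics, $M',g\models\neg p$ iff $p\notin V'(g)$. Since $V'$ is obtained from $V$ by \emph{removing} one propositional variable from one state's label, we have $V'(t)\subseteq V(t)$ for every state $t$, and in particular $p\notin V(g)$ implies $p\notin V'(g)$. Hence $M',g\models\neg p$, so $Model_{g,\neg p}(v_{M'})=1$, which is exactly the defining condition for negative monotonicity w.r.t.\ $VB$.

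A minor point worth spelling out is that flipping a single $VB$-bit genuinely corresponds to removing (at most) one propositional variable from (at most) one global state's valuation, so that the induced change is monotone decreasing on every $V(t)$; this is immediate from the encoding in Section~\ref{subsection:model-coding}, where $vb$ is just the characteristic table of the relation ``$p$ holds at $g$''. One should also note that the statement only concerns the $VB$ block, so no assumption about the $TB_i$ blocks is needed and the strategic/temporal content of ATL never enters: the formula $\neg p$ is purely local to the evaluation state.

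I do not expect a real obstacle here; the ``hard part,'' such as it is, is merely being careful that negative monotonicity is about the $0$-to-$1$ replacement in $v_M$ versus $v_{M'}$ in the \emph{right} direction, i.e.\ that lowering a $VB$-bit can only shrink valuations and therefore can only make a negative literal easier to satisfy at a fixed state. Everything else is a one-line unfolding of the semantics of $\neg p$.
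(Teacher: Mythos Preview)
Your proposal is correct and follows essentially the same argument as the paper's own proof: assume $Model_{g,\neg p}(v_M)=1$, flip one $VB$-bit from $1$ to $0$ to obtain $v_{M'}$, observe that the resulting model $M'$ differs only by losing one propositional variable at one state, and conclude that $p\notin V(g)$ implies $p\notin V'(g)$, hence $Model_{g,\neg p}(v_{M'})=1$. You add a little more detail (the explicit inclusion $V'(t)\subseteq V(t)$ and the remark that only the valuation at $g$ matters), but the route is identical.
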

\begin{proof}
Let $v_M$ be a bit vector such that $Model_{g,\neg p}(v_M)=1$.
This means that $M,g \models \neg p$ for $M$, encoded by $v_M$, with the initial state $g$.
Now, let $v_{M'}$ be a bit vector which differs from $v_M$ only in one value $vb_j$,
for $j=1,\dots,k$, which is $1$ in $v_M$ and $0$ in $v_{M'}$.
The model $M'$, encoded by $v_{M'}$, has the same states, transitions, and state properties as $M$,
except for one state property which does not hold in $M'$ but holds in $M$,
i.e., one propositional variables is false in some state $t$ in $M'$
but is true in $t$ in $M$.
Thus if $M,g \models \neg p$, then $M',g \models \neg p$ and finally $Model_{g,\neg p}(v_{M'})=1$.
\end{proof}

\begin{theorem}
\label{theorem:TB-positive-negative-monotonicity-of-p-and-negation}

The predicate $Model_{g,\phi}(\varM)$ is both positive
and negative monotonic w.r.t. $TB_i$ for each $i \in {\Ag}$
if $\phi \in \{ p, \neg p, p \wedge q \}$, where $p,q \in \PV$.
\end{theorem}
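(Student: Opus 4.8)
The plan is to exploit the fact that the truth of a purely propositional formula at a state is determined entirely by the valuation at that state, and that this valuation is stored in the $VB$ portion of $v_M$, which is disjoint from the $TB_i$ blocks. I therefore expect $Model_{g,\phi}(\varM)$ to be not merely monotonic but actually \emph{independent} of $TB_i$, which immediately yields both positive and negative monotonicity.

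Concretely, first I would fix an agent $i \in \Ag$ and a single coordinate inside the block $TB_i$, and let $v_M$ and $v_{M'}$ be two bit vectors that agree everywhere except possibly at that coordinate. Writing $M=(\States,\iota,T,V)$ and $M'=(\States',\iota',T',V')$ for the induced models, I would observe that the $TB_i$ block encodes only the local protocol $P_i$ of agent $i$ (and, through the synchronous product construction of Sec.~\ref{subsection:model-coding}, the global transition function). Hence $\States=\States'$, $\iota=\iota'$, and, crucially, $V=V'$; only $T$ and $T'$ may differ.

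Next I would recall the semantics of the three formulas in question: $M,g \models p$ iff $p \in V(g)$; $M,g \models \neg p$ iff $p \notin V(g)$; and $M,g \models p\wedge q$ iff $\{p,q\}\subseteq V(g)$. In each case the defining condition refers only to $V(g)$, which is the same in $M$ and $M'$. Therefore $M,g\models\phi$ iff $M',g\models\phi$, i.e.\ $Model_{g,\phi}(v_M)=Model_{g,\phi}(v_{M'})$. A predicate whose value is unaffected by flipping a coordinate is trivially both positive and negative monotonic with respect to that coordinate; since $i$ and the coordinate were arbitrary, the claim follows for every $TB_i$.

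There is no genuine obstacle here; the only point worth a remark is that flipping a bit of $TB_i$ from $1$ to $0$ could in principle leave some local state of agent $i$ with an empty set of available actions, so that $v_{M'}$ does not encode an admissible MAS. This does not affect the argument: the valuation component $V$ is well defined irrespective of the protocol, and the truth value of a propositional $\phi$ at $g$ still depends only on $V(g)$. Alternatively, one restricts attention, as elsewhere in the paper, to bit vectors encoding admissible models, and the reasoning applies verbatim.
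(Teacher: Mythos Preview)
Your proposal is correct and follows essentially the same approach as the paper: the paper's proof simply observes that adding or removing transitions does not alter the truth of a propositional formula $\phi\in\{p,\neg p,p\wedge q\}$, which is exactly your independence-of-$TB_i$ argument spelled out in more detail. Your additional remark about admissibility when a bit is flipped to $0$ is a nice clarification that the paper leaves implicit.
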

\begin{proof}
Notice that adding or removing transitions (both local or global)
does not alter the truthfulness of the formula $\phi \in \{ p,
\neg p, p \wedge q \}$ as long as $p$ and $q$ are propositional
variables. Therefore, $Model_{g,\phi}(\varM)$ for $\phi \in \{p,\neg p, p\wedge q \}$ is both positive and negative
monotonic w.r.t. $TB_i$ for each $i \in {\Ag}$.
\end{proof}


\begin{theorem}
\label{theorem:TBi-positive-monotonicity-of-AX-AG-AU}

The predicate $Model_{g,\phi}(\varM)$ is positive monotonic w.r.t.
$TB_i$ for $i \in {\GA}$ if $\phi \in \{\coop{{\GA}}\Next p,
\coop{{\GA}}\Always p$, $\coop{{\GA}} p \Until q\}$, where $p, q
\in \PV$, $\GA \subseteq {\Ag}$.
\end{theorem}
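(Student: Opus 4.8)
The plan is to argue directly from the ATL semantics, exploiting the fact that a single bit of $TB_i$ controls exactly one entry of agent $i$'s local protocol table $lp_i$. Flipping such a bit from $0$ to $1$ at position $(l^k_i,a^j_i)$ amounts to declaring the local action $a^j_i$ available at $l^k_i$, which adds the local transition $T_i(l^k_i,a^j_i)=l^j_i$ and hence possibly several global transitions; it removes nothing and leaves the valuation block $VB$ untouched. So if $v_{M'}$ differs from $v_M$ only in one such bit, the model $M'$ has the same state space, the same valuation $V$, and a transition relation that contains that of $M$. As in the preceding theorems, an arbitrary positive change of $TB_i$ is a finite composition of such single flips, so it suffices to treat one flip.

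Next I would fix a witness. Suppose $Model_{g,\phi}(v_M)=1$, i.e. $M,g\models\phi$ with $\phi\in\{\coop{\GA}\Next p,\coop{\GA}\Always p,\coop{\GA}p\Until q\}$, and let $\strat_\GA\in\CollStra$ be a witnessing joint strategy in $M$: $\outcome_M(g,\strat_\GA)\neq\emptyset$ and every $\seq\in\outcome_M(g,\strat_\GA)$ satisfies the matching path formula $\psi\in\{\Next p,\Always p,p\Until q\}$. Because the protocol of agent $i$ only grows, each component $\strat_j$ (in particular $\strat_i$, since $i\in\GA$) still selects an action available at the relevant local state in $M'$, so $\strat_\GA$ is also a legal joint strategy in $M'$.

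The key step — and the only one requiring care — is the claim that $\outcome_{M'}(g,\strat_\GA)=\outcome_M(g,\strat_\GA)$ as sets of paths. A path $\seq$ lies in either outcome iff it starts at $g$, at every step each coalition member $j\in\GA$ plays $\strat_j(\seq[i])$, and consecutive states are connected by a global transition of the respective model. The coalition-move condition is identical in $M$ and $M'$; and the only global transitions present in $M'$ but not in $M$ are those in which agent $i$ performs the newly enabled action $a^j_i$ — an action that the fixed component $\strat_i$ never prescribes, while the protocols of the opponents in $\mathcal A\setminus\GA$ are unchanged. Hence no outcome path appears and none disappears. Since $V$ is also unchanged, every path in $\outcome_{M'}(g,\strat_\GA)$ still satisfies $\psi$ and the outcome is still nonempty, so $M',g\models\phi$, i.e. $Model_{g,\phi}(v_{M'})=1$; iterating over single flips gives positive monotonicity w.r.t.\ $TB_i$.

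I expect the main obstacle to be resisting the temptation to prove only a weaker inclusion: one might first guess merely $\outcome_{M'}(g,\strat_\GA)\supseteq\outcome_M(g,\strat_\GA)$ (more transitions, more paths), which would be fatal for a $\coop{\GA}\Always$- or $\coop{\GA}\Until$-formula, since a single bad new outcome path would destroy the witness. The point to nail down is that the extra transitions are \emph{unusable} under the fixed strategy because they belong to a coalition agent; the outcome set is therefore preserved exactly. It is precisely here that the hypothesis $i\in\GA$ (rather than $i\in\mathcal A$) is essential, in contrast with Theorem~\ref{theorem:TB-positive-negative-monotonicity-of-p-and-negation}, where the formula was transition-insensitive and the coalition played no role.
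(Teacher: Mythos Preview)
Your proof is correct and follows the same approach as the paper: flip one bit of $TB_i$ with $i\in\GA$, observe that a witnessing strategy $\strat_\GA$ in $M$ remains a legal strategy in $M'$, and conclude. Your treatment is in fact more thorough than the paper's, which simply says ``the existing strategies are still in place'' and jumps to $Model_{g,\phi}(v_{M'})=1$; you explicitly verify that $\outcome_{M'}(g,\strat_\GA)=\outcome_M(g,\strat_\GA)$, which is the step that actually carries the weight (and where the hypothesis $i\in\GA$ is used). One tiny wording nit: ``an action that the fixed component $\strat_i$ never prescribes'' should be read as ``never prescribes at states with $i$-th component $l^k_i$'', since $\strat_i$ could prescribe $a^j_i$ elsewhere; your argument already implicitly handles this correctly.
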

\begin{proof}
Let $\phi \in \{\coop{{\GA}}\Next p, \coop{{\GA}}\Always p, \coop{{\GA}} p \Until q\}$, where $p, q \in \PV$
and let $v_M$ be a bit vector such that $Model_{g,\phi}(v_M)=1$.
This means that $M,g \models \phi$ for $M$, encoded by $v_M$, with the initial state $g$.
Now, let $v_{M'}$ be a vector which differs from $v_M$ only in one
value $tb^j_i$, for some $i \in {\GA}$ and
$j \in \{1,\dots,(n_i)^2\}$,  which is $0$ in $v_M$ and $1$ in $v_{M'}$.
The model $M'$, encoded by $v_{M'}$, has the same states, state
properties, and local transitions of the agents, except for one
local transition of one agent from ${\GA}$ that is enabled in $M'$
but not in $M$.

If $Model_{g,\phi}(v_M)=1$, then there is a strategy $\strat_{\GA}\in \CollStra$ such that for
each path $\seq\in \outcome_\model(\stan,\strat_{\GA})$, $\seq \models \psi$
for $\psi \in  \{\Next p, \Always p, p \Until q\}$.

Observe that adding one local transition to one agent of ${\GA}$ results
in more strategies of the agents of ${\GA}$, but at the same time the existing strategies are still in place.
Therefore, the strategy $\strat_{\GA} \in \CollStra$ is in $M'$ as well.
Therefore, $Model_{g,\phi}(v_{M'})=1$.
\end{proof}

\begin{theorem}
\label{theorem:TBi-negative-monotonicity-of-AX-AG-AU}

The predicate $Model_{g,\phi}(\varM)$ is negative monotonic
w.r.t.
$TB_i$ for $i \in {\mathcal A} \setminus {\GA}$ if $\phi \in \{\coop{{\GA}}\Next p$, $\coop{{\GA}}\Always p, \coop{{\GA}} p \Until q\}$,
where $p, q \in \PV$, $\GA \subseteq {\Ag}$.
\end{theorem}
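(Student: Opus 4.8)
The plan is to fix a bit vector $v_M$ with $Model_{g,\phi}(v_M)=1$ — so the model $M$ it encodes satisfies $M,g\models\phi$ — and a vector $v_{M'}$ that agrees with $v_M$ except at a single entry $tb_i^j$ with $i\in{\mathcal A}\setminus{\GA}$, which equals $1$ in $v_M$ and $0$ in $v_{M'}$. Concretely, $M'$ is obtained from $M$ by \emph{disabling one local action of an agent $i$ that does not belong to the coalition ${\GA}$}; equivalently, $M'$ is $M$ with a set of global transitions deleted, namely exactly those whose $i$-th component used the disabled action. The states, the valuation $V$, and the local protocols and transitions of every agent other than $i$ are untouched. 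The goal is to show $M',g\models\phi$, i.e. $Model_{g,\phi}(v_{M'})=1$.

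Second, I would produce the witness strategy. Write $\psi$ for the temporal part of $\phi$, i.e. $\psi\in\{\Next p,\Always p,p\Until q\}$ according to the three cases. Since $M,g\models\phi=\coop{{\GA}}\psi$, there is $\strat_{\GA}\in\CollStra$ with $\outcome_{\model}(g,\strat_{\GA})\neq\emptyset$ and $\seq\models\psi$ for every $\seq\in\outcome_{\model}(g,\strat_{\GA})$. The very same $\strat_{\GA}$ is a legal collective strategy in $M'$: it is a tuple of functions $\States\to Act_j$ for $j\in{\GA}$ obeying the protocols $P_j$, and neither $\States$ nor any $P_j$ with $j\in{\GA}$ has changed (only $TB_i$ with $i\notin{\GA}$ was modified). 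The key observation is monotone shrinking of outcomes: $\outcome_{\model'}(g,\strat_{\GA})\subseteq\outcome_{\model}(g,\strat_{\GA})$. Indeed, every path of $M'$ is a path of $M$ (we only removed transitions), and the defining condition of an outcome — namely that $a_k^j=\strat_j(\seq[k])$ holds for all path positions $k$ and all $j\in{\GA}$ — is literally the same in both models. Hence every path in $\outcome_{\model'}(g,\strat_{\GA})$ already lies in $\outcome_{\model}(g,\strat_{\GA})$ and therefore satisfies $\psi$.

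Third, I would check non-emptiness of $\outcome_{\model'}(g,\strat_{\GA})$, which is the only part not given by the inclusion. Here one uses the standing assumption that in the encoded models every local protocol is nonempty at every local state: from any global state, the coalition fixing its moves according to $\strat_{\GA}$ while each agent of ${\mathcal A}\setminus{\GA}$ picks any still-available local action yields at least one global transition, and iterating this construction produces an infinite outcome path. Consequently $\outcome_{\model'}(g,\strat_{\GA})\neq\emptyset$, all of its paths satisfy $\psi$, and so $M',g\models\coop{{\GA}}\psi=\phi$. Negative monotonicity of $Model_{g,\phi}(\varM)$ w.r.t. $TB_i$ for $i\in{\mathcal A}\setminus{\GA}$ follows.

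The delicate point — the main obstacle — is precisely this non-emptiness step: flipping $tb_i^j$ from $1$ to $0$ could in principle leave some local state of agent $i$ with no enabled action, so that $v_{M'}$ no longer encodes a legitimate model; exactly as in the proofs of the preceding theorems, the statement is read over those bit vectors that do encode models (the nonempty-protocol requirement being part of this), which excludes this degenerate case. Everything else is the uniform case analysis over $\psi\in\{\Next p,\Always p,p\Until q\}$ handled by the outcome-inclusion argument, which I would not spell out in detail.
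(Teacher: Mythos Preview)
Your proposal is correct and follows essentially the same approach as the paper's own proof: fix a witnessing strategy $\strat_{\GA}$ in $M$, observe it remains a legal strategy in $M'$ since only an agent outside ${\GA}$ lost an action, use the inclusion $\outcome_{M'}(g,\strat_{\GA})\subseteq\outcome_{M}(g,\strat_{\GA})$ to transfer satisfaction of the temporal part, and invoke the nonempty-protocol assumption to guarantee $\outcome_{M'}(g,\strat_{\GA})\neq\emptyset$. Your treatment of the non-emptiness issue is in fact more explicit than the paper's, which simply states that ``the protocol function ensures that at least one action and thereby at least one transition must remain.''
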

\begin{proof}
Let $\phi \in \{\coop{{\GA}}\Next p,\coop{{\GA}}\Always p, \coop{{\GA}} p \Until q\}$, where $p, q \in \PV$, $\GA \subseteq {\Ag}$,
and $v_M$ be a bit vector s.t. $Model_{g,\phi}(v_M)=1$.
This means that $M,g \models \phi$ for $M$, encoded by $v_M$, with the initial state $g$.

Now, let $v_{M'}$ be a bit vector which differs from $v_M$ only in one value $tb^j_i$,
for some $i\in {\cal A} \setminus {\GA}$ and $j=1,\dots,(n_i)^2$, which is $1$ in $v_M$ and $0$ in $v_{M'}$.
The model $M'$, encoded by $v_{M'}$, has the same states, state properties, and local transitions of the agents,
except for one local transition of one agent of ${\mathcal A} \setminus {\GA}$ that is enabled in $M$
but not in $M'$.
Observe that deleting one local transition of some agent of ${\mathcal A} \setminus {\GA}$
results in the same number of strategies of the agents of ${\GA}$, but
for each strategy the number of paths in its outcome may be lower.
The protocol function ensures that at least one action and thereby at least one transition must remain
(not all can be deleted).
Thus, for any strategy of the agents of ${\GA}$, the number of transitions consistent with this strategy
cannot be reduced to zero.
If $Model_{g,\phi}(v_M)=1$, then there is a strategy $\strat_{\GA}\in \CollStra$ such that for each path
$\seq \in \outcome_\model(g,\strat_{\GA})$,
$\seq \models \psi$ for $\psi \in \{\Next p,\Always p, p \Until q\}$.
Therefore, the strategy $\strat_{\GA}\in \CollStra$ is in $M'$.
Since $\emptyset \neq \outcome_{\model'}(g,\strat_{\GA}) \subseteq \outcome_\model(g,\strat_{\GA})$,
we have $Model_{g,\phi}(v_{M'}) = 1$.
\end{proof}

\subsection{Function \textit{solve}}

In order to compute the value of the predicate $Model_{g,\phi}(v_M)$ for a given $M$, we define a new function, called $solve_{\phi}(V_m)$.
This function returns a set of states of $M$ such that $g \in solve_{\phi}(v_M)$ iff $Model_{g,\phi}(v_M)=1$, i.e., $M,g \models \phi$.
The monotonicity properties also apply to the function $solve_{\phi}$, as every state returned by this function can be viewed as an initial
state of the model $M$. Thus, the theorem below follows directly from Theorems
\ref{theorem:VB-positive-monotonicity-of-p-and-AX-AG-AU} --
\ref{theorem:TBi-negative-monotonicity-of-AX-AG-AU}.

\begin{theorem}
\label{theorem:monotonicity-of-solve}
The function $solve_{\phi}(\varM)$ is
\begin{itemize}
\item positive monotonic w.r.t. $VB$ for $\phi \in \{p,p\wedge q,$
$\coop{{\GA}}\Next p$, $\coop{{\GA}}\Always p,$ $\coop{{\GA}} p
\Until q\}$,

\item negative monotonic w.r.t. $VB$ for $\phi =\neg p$,

\item positive and negative monotonic w.r.t. $TB_i$ for $i \in
{\Ag}$ if $\phi \in \{ p,\neg p, p\wedge q \}$,

\item positive monotonic w.r.t. $TB_i$ for each $i \in {\GA}$ if $\phi
\in \{\coop{{\GA}}\Next p$, $\coop{{\GA}}\Always p,$ $\coop{{\GA}}
p \Until q\}$,

\item negative monotonic w.r.t. $TB_i$ for $i \in {\Ag} \setminus
{\GA}$ if $\phi \in \{ \coop{{\GA}}\Next p,$ \\ $\coop{{\GA}}\Always p,
\coop{{\GA}} p \Until q\}$,

where $p,q \in \PV$, $\GA \subseteq \Ag$.
\end{itemize}
\end{theorem}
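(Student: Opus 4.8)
The plan is to observe that the statement is purely a re-packaging of Theorems~\ref{theorem:VB-positive-monotonicity-of-p-and-AX-AG-AU}--\ref{theorem:TBi-negative-monotonicity-of-AX-AG-AU}, moving from the point predicate $Model_{g,\phi}$ to the set-valued function $solve_{\phi}$. First I would recall the defining relation $g \in solve_{\phi}(v_M) \iff Model_{g,\phi}(v_M)=1$, and note that monotonicity of a $2^{\States}$-valued function $F$ in a coordinate means exactly that flipping that bit in the allowed direction can only enlarge (positive) or only shrink (negative) the output set. So what must be shown is: for each bit position and each $\phi$ in the listed class, flipping the bit in the relevant direction preserves membership of every $g$ that was already in $solve_{\phi}(v_M)$ (positive case) resp. of every $g$ that remains, i.e. the new set is contained in the old (negative case).

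The key step is then a translation of quantifiers: ``$solve_{\phi}$ is positive monotonic w.r.t.\ a given bit'' is equivalent to ``for every global state $g$, $Model_{g,\phi}(\varM)$ is positive monotonic w.r.t.\ that bit'', because $solve_{\phi}(v_M) \subseteq solve_{\phi}(v_{M'})$ holds iff for all $g$, $Model_{g,\phi}(v_M)=1$ implies $Model_{g,\phi}(v_{M'})=1$. The crucial remark, already made in the prose preceding the theorem, is that Theorems~\ref{theorem:VB-positive-monotonicity-of-p-and-AX-AG-AU}--\ref{theorem:TBi-negative-monotonicity-of-AX-AG-AU} are each proved for an \emph{arbitrary} global state $g$ playing the role of the initial state; nothing in those proofs uses that $g$ is the fixed initial state $\iota$ of the MAS. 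Hence each of those theorems already gives the required implication simultaneously for all $g$, and the five bullet points of the present theorem are obtained by reading off, bullet by bullet:
the first from Theorem~\ref{theorem:VB-positive-monotonicity-of-p-and-AX-AG-AU},
the second from Theorem~\ref{theorem:VB-negative-monotonicity-of-negation},
the third from Theorem~\ref{theorem:TB-positive-negative-monotonicity-of-p-and-negation},
the fourth from Theorem~\ref{theorem:TBi-positive-monotonicity-of-AX-AG-AU},
and the fifth from Theorem~\ref{theorem:TBi-negative-monotonicity-of-AX-AG-AU}.

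There is essentially no analytic obstacle here; the only thing to be careful about is making the ``every returned state can be viewed as an initial state'' argument airtight. Concretely, I would make explicit that the bit vector $v_M$ encodes a \emph{family} of models differing only in the initial state (as noted in Sec.~\ref{subsection:model-coding}), so that for any $g \in \States$ the pair $(v_M, g)$ is a legitimate instance for each of the earlier theorems; then the set inclusion $solve_{\phi}(v_M) \subseteq solve_{\phi}(v_{M'})$ (or the reverse, in the negative cases) follows by applying the relevant earlier theorem pointwise over $g$. I expect the write-up to be short: state the equivalence between set-inclusion of $solve$ and the universally-quantified pointwise implication for $Model$, invoke the quantifier-uniformity of the earlier proofs, and conclude.
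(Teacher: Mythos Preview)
Your proposal is correct and matches the paper's approach exactly: the paper states that the theorem ``follows directly from Theorems~\ref{theorem:VB-positive-monotonicity-of-p-and-AX-AG-AU}--\ref{theorem:TBi-negative-monotonicity-of-AX-AG-AU}'', justified by the same observation you make, namely that every state returned by $solve_{\phi}$ can be viewed as an initial state of $M$. Your write-up spells out the pointwise translation between set-inclusion for $solve_{\phi}$ and the universally quantified implication for $Model_{g,\phi}$ more explicitly than the paper does, but the argument is identical.
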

%

Moreover, to compute $solve_{\phi}(\varM)$ for each ATL formula $\phi$, a new evaluation function $solve_{op}(Y_1,\varM)$ is defined
for an unary operator $op$ and $solve_{op}(Y_1,Y_2,\varM)$ for a binary operator $op$,
and $Y_1, Y_2 \subseteq \States$.
This function evaluates the operator $op$ on sets of states $Y_1$, $Y_2$ instead of the formulae holding in these states.
If $\phi = p \in \PV$, then for a given model $M$, $solve_{p}(v_M)$ returns the set of states of $M$ in which $p$ holds.
Otherwise,  $solve_{\phi}(v_M)$ takes the top-most operator $op$ of $\phi$ and solves its argument(s) recursively using the function $solve_{op}$
and applying $solve_{op}(Y_1,v_M)$ ($solve_{op}(Y_1,Y_2,v_M)$) to the returned set(s) of states.

Now, Theorem \ref{theorem:monotonicity-of-solve} can be rewritten by replacing propositional variables $p$ and $q$ by sets of states
satisfying these variables.

\begin{theorem}
\label{theorem:monotonicity-of-solve-op}
The function $solve_{op}(Y_1,\varM)$ for an unary operator $op$ and $solve_{op}(Y_1,Y_2,\varM)$ for a binary operator $op$ is
\begin{itemize}
\item positive monotonic w.r.t. $VB$ for $op \!\in\! \{\wedge,\! \coop{{\GA}}\Next,\! \coop{{\GA}}\Always$,\! $\coop{{\GA}}\Until \}$,

\item negative monotonic w.r.t. $VB$ for $op =\neg$,

\item positive and negative monotonic w.r.t. $TB_i$ for $i \in
{\Ag}$ and $op \in \{\neg, \wedge \}$,

\item positive monotonic w.r.t. $TB_i$ for $i\! \in\! {\GA}$ and $op\! \in\! \{\coop{{\GA}}\Next, \coop{{\GA}}\Always,$ $\coop{{\GA}}\Until \}$,

\item negative monotonic w.r.t. $TB_i$ for $i \in {\Ag} \setminus {\GA}$ and $op \in \{ \coop{{\GA}}\Next,$ $\coop{{\GA}}\Always,\coop{{\GA}}\Until \}$.
\end{itemize}
\end{theorem}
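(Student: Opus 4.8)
The plan is to obtain Theorem~\ref{theorem:monotonicity-of-solve-op} as a direct restatement of Theorem~\ref{theorem:monotonicity-of-solve}, using the correspondence set up in the paragraph introducing $solve_{op}$. The key observation is that $solve_{op}(Y_1,\varM)$ (and $solve_{op}(Y_1,Y_2,\varM)$) agrees with $solve_\phi(\varM)$ whenever $Y_1$ (resp.\ $Y_1,Y_2$) is precisely the set of states satisfying the subformula(e) that $op$ is applied to; in particular, when $op$ is applied to sets of states satisfying propositional variables $p$ (and $q$), we recover exactly $solve_{op\,p}(\varM)$, resp.\ $solve_{p\,op\,q}(\varM)$, to which Theorem~\ref{theorem:monotonicity-of-solve} applies. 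So the first step is to state this correspondence cleanly: $solve_{\coop{\GA}\Next}(Y_1,\varM)$ behaves in $\varM$ just like $solve_{\coop{\GA}\Next p}(\varM)$ with $p$ standing for ``being in $Y_1$'', and similarly for $\coop{\GA}\Always$, $\coop{\GA}\Until$, $\neg$, and $\wedge$.

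Next I would argue that the monotonicity is genuinely a property of the evaluation operator in $\varM$ with the set-argument(s) held fixed, not of the particular formula whose denotation $Y_1$ (or $Y_1,Y_2$) happens to be. The point is that $solve_{op}(Y_1,\cdot)$ is computed from $Y_1$ and the model structure encoded by $\varM$ by the \emph{same} recursive clause regardless of which formula produced $Y_1$; hence the inclusions $solve_{op}(Y_1,v_M)\subseteq solve_{op}(Y_1,v_{M'})$ (or the reverse) established in the proofs of Theorems~\ref{theorem:VB-positive-monotonicity-of-p-and-AX-AG-AU}--\ref{theorem:TBi-negative-monotonicity-of-AX-AG-AU} for the special case where $Y_1$ is the extension of a propositional variable go through verbatim for an arbitrary fixed $Y_1$: in each of those proofs, $p$ (and $q$) enter only through the set of states in which they hold, and the strategy-transfer / path-comparison argument only uses that this set is unchanged between $M$ and $M'$. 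So each bullet of Theorem~\ref{theorem:monotonicity-of-solve-op} is obtained by reading the correspondingly-named bullet of Theorem~\ref{theorem:monotonicity-of-solve} (equivalently the underlying Theorems~\ref{theorem:VB-positive-monotonicity-of-p-and-AX-AG-AU}--\ref{theorem:TBi-negative-monotonicity-of-AX-AG-AU}) with ``$p$ holds at $t$'' replaced throughout by ``$t\in Y_1$'' and ``$q$ holds at $t$'' by ``$t\in Y_2$''.

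Concretely, the steps are: (i) for $op=\wedge$, positive monotonicity w.r.t.\ $VB$ and positive+negative monotonicity w.r.t.\ $TB_i$ follow exactly as in Theorems~\ref{theorem:VB-positive-monotonicity-of-p-and-AX-AG-AU} and \ref{theorem:TB-positive-negative-monotonicity-of-p-and-negation}, since intersecting two fixed sets is independent of transitions and monotone in each coordinate of the valuation that can only add states to $Y_1$ or $Y_2$ --- but here $Y_1,Y_2$ are fixed, so $solve_\wedge(Y_1,Y_2,\cdot)$ is literally constant in $\varM$, giving both monotonicities trivially; (ii) for $op=\neg$, $solve_\neg(Y_1,\cdot)=\States\setminus Y_1$ is again constant in $\varM$, so it is trivially positive \emph{and} negative monotonic w.r.t.\ every $TB_i$, and the $VB$-clause is inherited from Theorem~\ref{theorem:VB-negative-monotonicity-of-negation} (indeed with $Y_1$ fixed it is also constant, hence both-ways, but we only claim what is needed); (iii) for $op\in\{\coop{\GA}\Next,\coop{\GA}\Always,\coop{\GA}\Until\}$, invoke the strategy-existence/outcome-path analysis of Theorems~\ref{theorem:VB-positive-monotonicity-of-p-and-AX-AG-AU}, \ref{theorem:TBi-positive-monotonicity-of-AX-AG-AU}, and \ref{theorem:TBi-negative-monotonicity-of-AX-AG-AU} with $p,q$ replaced by membership in the fixed sets $Y_1,Y_2$ --- adding a transition to an agent in $\GA$ only enlarges the strategy space while keeping old strategies valid (positive monotonicity w.r.t.\ $TB_i$, $i\in\GA$), deleting a transition of an agent outside $\GA$ shrinks each outcome set but keeps it nonempty by the protocol condition (negative monotonicity w.r.t.\ $TB_i$, $i\notin\GA$), and making one more state belong to $Y_1$ or $Y_2$ only adds states where the temporal sub-property $\psi\in\{\Next,\Always,\Until\}$ holds along a path (positive monotonicity w.r.t.\ $VB$).

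I expect no real obstacle here: the theorem is a ``lifting'' of results already proved, and the only thing to be careful about is making explicit that the set arguments $Y_1,Y_2$ are held fixed while $\varM$ varies, so that the previous proofs apply with $Y_1,Y_2$ in place of the extensions of $p,q$. The one point worth a sentence of care is that in the $\coop{\GA}\Until$ and $\coop{\GA}\Always$ cases the path-comparison step in Theorem~\ref{theorem:VB-positive-monotonicity-of-p-and-AX-AG-AU} silently uses that the \emph{same} strategy in $M'$ produces an outcome set whose paths are ``the corresponding paths'' of $M$ possibly passing through more $Y_1$/$Y_2$ states; with $Y_1,Y_2$ fixed this is immediate, since only the valuation (for $VB$) or the transition structure (for $TB_i$) changed, not the sets themselves. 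Wrapping this up, the proof is essentially: ``Fix $Y_1$ (and $Y_2$). Then each clause coincides with the correspondingly-named clause of Theorem~\ref{theorem:monotonicity-of-solve} applied to the atomic case, reading $Y_1,Y_2$ for the extensions of $p,q$; the arguments there use $p,q$ only through these extensions, hence carry over unchanged.''
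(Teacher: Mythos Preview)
Your proposal is correct and matches the paper's approach: the paper does not give a separate proof but simply introduces Theorem~\ref{theorem:monotonicity-of-solve-op} as Theorem~\ref{theorem:monotonicity-of-solve} ``rewritten by replacing propositional variables $p$ and $q$ by sets of states satisfying these variables,'' which is exactly the lifting you spell out. Your write-up is in fact more explicit than the paper's, and your observation that with $Y_1,Y_2$ held fixed several of the clauses become trivially constant in $\varM$ (hence monotonic both ways) is a clean way to discharge the $VB$ and $\neg/\wedge$ bullets.
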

%
%
%
\noindent
To compute $solve_{\phi}(v_M)$, $solve_{op}(Y_1,v_M)$, and $solve_{op}(Y_1,Y_2,v_M)$ the model checking
algorithms
described in \cite{MCMAS-Lomuscio2017} are applied.

\section{Approximating ATL Models}
\label{section:ATL-formula-approximation}

In this section we show how to approximate models for ATL in order to solve the satisfiability
problem using SAT modulo monotonic theories.
First, the construction of over and under approximations of a model are given.
Then, the approximation algorithm is defined together with the proofs of its properties.

\subsection{Construction of $M_{over}$ and $M_{under}$ }
\label{subsection:structures-over-under}

Given a set of agents $\Ag=\{1,\dots,n\}$, we fix for each $i \in \Ag$ a set of local states $L_i$,
an initial state $\iota_i$, and a set of local actions 
defined like in Def. \ref{def:MAS}.
Next we define a function, called a \textit{partial protocol}:
%
$$CP_i: L_i \times Act_i \rightarrow \{0,1,undef\}.$$
By a \textit{partial MAS}, denoted $MAS_{CP}$, we mean a MAS in which each agent is associated
with a partial protocol rather than with a protocol.
Then, a model induced by $MAS_{CP}$ together with a \textit{partial valuation of the propositional variables}
$$CV:\States \times \PV \rightarrow \{ 0,1,undef \}$$
is called a \textit{partial model}, denoted by $M_{par}$.
Both a partial protocol and a partial valuation can be extended to total
functions.
The intention behind these definitions is to give requirements on the models. 

For each partial model, total models $M^{\Ag}_{under}$ and $M^{\GA}_{over}$, for ${\GA} \subseteq {\mathcal A}$, are constructed.
%
%
First, for every agent $i \in \Ag$ we define:
\textit{a necessary local protocol}  $\underline{P_i}: L_i \to 2^{Act_i}$
and \textit{a possible local protocol}  $\overline{P_i}: L_i \to 2^{Act_i}$, where:\\
 (1) if
 $CP_i(l_i,a_i)=1$ then $a_i \in \underline{P_i}(l_i)$ and $a_i \in \overline{P_i}(l_i)$,\\
 (2) if
 $CP_i(l_i,a_i)=0$ then $a_i \not \in \underline{P_i}(l_i)$ and $a_i \not \in \overline{P_i}(l_i)$,\\
 (3) if
 $CP_i(l_i,a_i)=undef$ then $a_i \not\in \underline{P_i}(l_i)$ and $a_i \in \overline{P_i}(l_i)$.\\
Notice that the possible local protocol is an extension of the
necessary local protocol, i.e., the following condition holds:
for every local state $l_{i}$,
$\underline{P_{i}}(l_{i}) \subseteq \overline{P_{i}}(l_{i})$.
In a similar way, total valuations of the propositional variables are defined:
%
%
%
\textit{a necessary valuation}
 $\underline{V}: \States \to 2^{\PV}$
and
\textit{a possible valuation}
 $\overline{V}: \States \to 2^{\PV}$ such that:\\
 (1) if
 $CV(g,p)=1$ then $p \in \underline{V}(g)$ and $p \in \overline{V}(g)$,\\
 (2) if
 $CV(g,p)=0$ then $p \not\in \underline{V}(g)$ and $p \not\in \overline{V}(g)$,\\
 (3) if
 $CV(g,p)=undef$ then $p \not\in \underline{V}(g)$ and $p \in \overline{V}(g)$.

\noindent
Observe that for every global state $g \in \States$ we have
$\underline{V}(g) \subseteq \overline{V}(g)$.

The model $M^{\Ag}_{under}$ is defined as in Def. \ref{def:IS} of all agents $i \in \Ag$ with
${\AG_i}=(L_{i},\iota_i,Act_{i},\underline{P_i},T_{i})$
and for the valuation of the propositional variables
$\underline{V}$.
The model $M_{over}^{\GA}$ is defined as in Def. \ref{def:IS} of agents $i \in
{\GA}$ with ${\AG_i}=(L_{i}, \iota_i,
Act_{i},\overline{P_i},T_{i})$, agents $j \in {\mathcal A}
\setminus {\GA}$ with ${\AG_j}=(L_{j}, \iota_j, Act_{j},\underline{P_j},T_{j})$, and for the valuation of the
propositional variables $\overline{V}$.

\subsection{Algorithm \textit{$\solveApprox$}}

We say that the model $M=(\States,\iota,T,V)$ induced by agents
$\Ag=\{1,\dots,n\}$ with $AG_i=(L_i,\iota_i,Act_i,P_i,T_i)$ for $i \in \Ag$ and
the propositional variables $\PV$ is \textit{compatible} with a partial model $M_{par}$
induced by the same sets of agents and propositional variables, and determined by
the given partial protocols $CP_i$ for $i \in \Ag$, and a partial valuation $CV$
if $P_i$ is consistent with $CP_i$ for every $i \in \Ag$, and $V$ satisfies all conditions determined by $CV$.
Formally:

\begin{itemize}

\item (1) if $CP_i(l_i,a_i)=1$ then $a_i \in P_i(l_i)$,\\
      (2) if $CP_i(l_i,a_i)=0$ then $a_i \not \in P_i(l_i)$,

\item (1) if $CV(g,p)=1$ then $p \in V(g)$,\\
      (2) if $CV(g,p)=0$ then $p \not\in V(g)$.

\end{itemize}
\noindent Observe that $M^{\Ag}_{under}$ and $M^{\GA}_{over}$ are
compatible with $M_{par}$.
What is more, for any model $M$ compatible with $M_{par}$ we have: 
$$\forall i \in \Ag\;\;\; \underline{P_{i}}(l_{i})\subseteq P_i(l_i) \subseteq \overline{P_{i}}(l_{i}),\;\;\;
\underline{V}(g) \subseteq V(g) \subseteq \overline{V}(g). $$

\begin{theorem}
\label{theorem:variable-inclusion} Let $v_M$,
$v_{M^{\Ag}_{under}},v_{M^{\GA}_{over}}$, for  some ${\GA}
\subseteq \Ag$, be bit vectors encoding models $M$,
$M^{\Ag}_{under}$, and $M^{\GA}_{over}$, respectively, then we
have:

\begin{itemize}
\item
$v_{M^{\Ag}_{under}}[tb_i[j_i]] \leq v_{M}[tb_i[j_i]] \leq v_{M^{\GA}_{over}}[tb_i[j_i]]$\\
      for all $i \in {\GA}$ and for all $1 \leq j_i \leq n_i$,
            for short

            $v_{M^{\Ag}_{under}}[TB_i] \leq v_{M}[TB_i] \leq v_{M^{\GA}_{over}}[TB_i]$ for $i \in {\GA}$, and

\item $v_{M^{\Ag}_{under}}[vb_j] \leq v_{M}[vb_j] \leq
v_{M^{\GA}_{over}}[vb_j]$ for all $1 \leq j \leq k$,

      for short $v_{M^{\Ag}_{under}}[VB] \leq v_{M}[VB] \leq v_{M^{\GA}_{over}}[VB]$.
\end{itemize}
\end{theorem}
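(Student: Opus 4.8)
The statement to prove (Theorem~\ref{theorem:variable-inclusion}) is essentially a bookkeeping fact: it translates the set-inclusions
$\underline{P_i}(l_i)\subseteq P_i(l_i)\subseteq\overline{P_i}(l_i)$ and $\underline{V}(g)\subseteq V(g)\subseteq\overline{V}(g)$,
valid for any $M$ compatible with $M_{par}$ (as observed just before the statement), into the corresponding coordinatewise inequalities on the bit vectors that encode these models.
The plan is to recall from Sec.~\ref{subsection:model-coding} how the bit vector $v_M=(tb_1,\dots,tb_n,vb)$ is built, and then check the two bullets separately.

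First I would handle the $VB$ part. By construction $vb=(vb_1,\dots,vb_k)$ with $k=|\States|\cdot|\PV|$, and the entry of $vb$ associated with a pair $(g,p)$ is $1$ iff $p\in V(g)$. Since $M$, $M^{\Ag}_{under}$ and $M^{\GA}_{over}$ are all defined over the same $\States$ and $\PV$, these bit vectors are indexed identically, so for each coordinate $j$ corresponding to $(g,p)$ we have $v_{M^{\Ag}_{under}}[vb_j]=1 \iff p\in\underline V(g)$, $v_M[vb_j]=1\iff p\in V(g)$, and $v_{M^{\GA}_{over}}[vb_j]=1\iff p\in\overline V(g)$. The chain $\underline V(g)\subseteq V(g)\subseteq\overline V(g)$ then gives the desired coordinatewise inequality $v_{M^{\Ag}_{under}}[vb_j]\le v_M[vb_j]\le v_{M^{\GA}_{over}}[vb_j]$ for all $j$, which is the shorthand $v_{M^{\Ag}_{under}}[VB]\le v_M[VB]\le v_{M^{\GA}_{over}}[VB]$.

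Next I would handle the $TB_i$ part, which is where a small subtlety lies. The block $tb_i$ encodes the local protocol $P_i$ via the Boolean table $lp_i$, whose entry at $(l_i^k,a_i^j)$ is $1$ iff $a_i^j\in P_i(l_i^k)$. The relevant fact is: for a coalition $\GA$, $M_{over}^{\GA}$ uses the \emph{possible} local protocol $\overline{P_i}$ exactly for the agents $i\in\GA$ (for the others it uses $\underline{P_j}$), while $M^{\Ag}_{under}$ uses $\underline{P_i}$ for all agents. Hence, restricting attention to $i\in\GA$ as the theorem does, we have $\underline{P_i}(l_i)\subseteq P_i(l_i)\subseteq\overline{P_i}(l_i)$ coordinatewise in the table $lp_i$, and since the $TB_i$ blocks are indexed identically across the three encodings, this yields $v_{M^{\Ag}_{under}}[tb_i[j_i]]\le v_M[tb_i[j_i]]\le v_{M^{\GA}_{over}}[tb_i[j_i]]$ for all $i\in\GA$ and all relevant indices $j_i$, i.e.\ the shorthand $v_{M^{\Ag}_{under}}[TB_i]\le v_M[TB_i]\le v_{M^{\GA}_{over}}[TB_i]$ for $i\in\GA$.

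The main obstacle is not mathematical depth but making sure the indexing conventions line up: one must be explicit that all three models share the same sets of agents, local states and propositional variables (so the bit vectors have the same length and the same coordinate-to-object correspondence), and one must be careful to state the $TB$ inequality only for $i\in\GA$, since for $j\notin\GA$ the over-approximation uses $\underline{P_j}$ rather than $\overline{P_j}$ and the right-hand inequality would fail there. With those conventions pinned down, the proof is a direct application of the pointwise inclusions recorded immediately before the theorem.
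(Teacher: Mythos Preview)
Your proposal is correct and follows exactly the paper's approach: the paper's proof consists of the single line ``Follows from the definitions of $M^{\Ag}_{under}$ and $M^{\GA}_{over}$,'' and what you have written is precisely the unpacking of those definitions into the coordinatewise inequalities, including the observation that the $TB_i$ inequality is only asserted for $i\in\GA$.
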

\begin{proof}
Follows
from the definitions of $M^{\Ag}_{under}$ and
$M^{\GA}_{over}$.
\end{proof}
This means that each transition in $M^{\Ag}_{under}$ is also a
transition in $M$, and each transition in $M$ is a transition in
$M^{\GA}_{over}$. Similarly for the propositional variables, if
some propositional variable holds true at state $g$ of
$M^{\Ag}_{under}$, then it also holds true at the same state of
$M$, and if some propositional variable holds true at $g$ of $M$,
then it also holds true at the same state of $M^{\GA}_{over}$.

Now, a new function $\solveApprox_{M_{par}}(\phi,\varM[1],\varM[2])$ over two separate assignments of transitions and states
$\varM[1]=(TB^1_1,\dots, TB^1_n, VB^1)$ and $\varM[2]=(TB^2_1,\dots, TB^2_n, VB^2)$ is defined.
For a given partial model $M_{par}$ and two models $M_1$ and $M_2$ compatible with $M_{par}$,
the output of the function is determined by the following algorithm.

\vspace{0,2cm} \textbf{Algorithm} $\solveApprox_{M_{par}}(\phi, \bitM[1], \bitM[2])$

1: \textbf{if} $\phi \in \PV$ \textbf{then}

2: \hspace{0,5cm} \textbf{return} $\{ g \in \States :
M_1,g \models \phi \}$

3: \textbf{else if} $\phi=op(\psi)$ \textbf{then}

4: \hspace{0,5cm} \textbf{if} $op$ is $\neg$ \textbf{then}
\hspace{0,5cm}\hspace{0,5cm} // negative monotonic

5: \hspace{0,5cm}\hspace{0,5cm} $Y := \solveApprox_{M_{par}}(\psi,
 \bitM[2], \bitM[1])$

6: \hspace{0,5cm}\hspace{0,5cm} \textbf{return} $solve_{op}(Y,
\bitM[2])$

7: \hspace{0,5cm} \textbf{else} \hspace{0,5cm} \hspace{0,5cm}
\hspace{0,9cm}// $op \in \{\coop{\GA}\Next, \coop{\GA}\Always\}$

8: \hspace{0,5cm}\hspace{0,5cm}$Y:=\solveApprox_{M_{par}}(\psi,
\bitM[1], \bitM[2])$

9: \hspace{0,5cm}\hspace{0,5cm}\textbf{if} $\bitM[1]=\MAunder$
\textbf{then}

10: \hspace{0,5cm}\hspace{0,5cm}\hspace{0,5cm} \textbf{return}
$solve_{op}(Y, \bitM[1])$

11: \hspace{0,2cm}\hspace{0,5cm} \textbf{else} \textbf{return}
$solve_{op}(Y, \MGover)$

12: \textbf{else if}  $\phi \in \{ \coop{\GA}\psi_1 \Until \psi_2,
\psi_1 \wedge \psi_2 \}$

13: \hspace{0,5cm} $Y_1:= \solveApprox_{M_{par}}(\psi_1, \bitM[1],\bitM[2])$

14: \hspace{0,5cm} $Y_2:= \solveApprox_{M_{par}}(\psi_2,\bitM[1],\bitM[2])$

15: \hspace{0,5cm} \textbf{if} $\phi$ is $\coop{\GA}\psi_1 \Until
\psi_2$ \textbf{then}

16: \hspace{0,5cm}\hspace{0,5cm} \textbf{if} $\bitM[1]=\MAunder$
\textbf{then}

17: \hspace{0,5cm}\hspace{0,5cm}\hspace{0,5cm} \textbf{return}
$solve_{op}(Y_1,Y_2, \bitM[1])$

18: \hspace{0,5cm}\hspace{0,5cm} \textbf{else} \textbf{return}
$solve_{op}(Y_1,Y_2, \MGover)$

19: \hspace{0,5cm} \textbf{else} \hspace{3,5cm} // $op=\wedge$

20: \hspace{0,5cm}\hspace{0,5cm} \textbf{return}
$solve_{op}(Y_1,Y_2, \bitM[1])$



\begin{theorem}
\label{theorem:solveApprox-is-monotonic}
The function $\solveApprox_{M_{par}}(\phi, \varM[1], \varM[2])$ is
\begin{itemize}
\item positive monotonic w.r.t. $TB^{1}_i$ for $i \in {\Ag}$ and $VB^{1}$ and
negative monotonic w.r.t. $TB^{2}_i$ for $i \in {\Ag}$ and $VB^{2}$ for
$\phi \in \{p, \neg p, p \wedge q \}$, and

\item positive monotonic w.r.t. $TB^{1}_i$ for $i \in {\GA}$ and $VB^{1}$ and
negative monotonic w.r.t. $TB^{2}_i$ for $i \in {\GA}$ and $VB^{2}$ for
$\phi \in \{ \coop{{\GA}}\Next p$, $\coop{{\GA}}\Always p,$ $\coop{{\GA}} p\Until q\} \}$.
\end{itemize}
%
%
\end{theorem}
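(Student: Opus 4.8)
The plan is to argue by a case analysis on the form of $\phi$, mirroring the branches of the algorithm $\solveApprox_{M_{par}}$, with the atomic case serving as the base for the recursive calls. Each case will be reduced to three ingredients: (i) the monotonicity of $solve_{op}$ in the model vector established in Theorem~\ref{theorem:monotonicity-of-solve-op}; (ii) the fact that $\MAunder$ and $\MGover$ are comparable --- they coincide on the components indexed by $TB_i$ for $i\notin\GA$ (by construction of $M^{\GA}_{over}$) and $\MGover$ dominates $\MAunder$ on the components indexed by $TB_i$ for $i\in\GA$ and by $VB$ (Theorem~\ref{theorem:variable-inclusion}); and (iii) the elementary fact, immediate from the semantics of the temporal/strategic operators and from the fixed-point model-checking procedures of \cite{MCMAS-Lomuscio2017} that implement $solve$, that $solve_{op}(Y,v)$ and $solve_{op}(Y_1,Y_2,v)$ are monotone non-decreasing in their state-set arguments. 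Throughout, recall that $\solveApprox_{M_{par}}$ is only applied to vectors encoding models compatible with $M_{par}$; by Theorem~\ref{theorem:variable-inclusion} such a vector dominates $\MAunder$ coordinatewise, so $\MAunder$ is the least element of the range of the first argument. This is the observation that tames the test ``$\bitM[1]=\MAunder$'' in lines~9--11 and 16--18.

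First I would settle the atomic case $\phi=p$: by line~2 the output is $\{g\in\States : M_1,g\models p\}$, i.e. exactly the set of states whose $(g,p)$-bit of $VB^1$ is set; hence it is positive monotonic w.r.t. $VB^1$, does not depend on any $TB^1_i$ ($i\in\Ag$) so it is both positive and negative monotonic there, and does not depend on $\bitM[2]$ at all, so it is negative monotonic w.r.t. every $TB^2_i$ and $VB^2$. For $\phi=\neg p$ the algorithm executes lines~4--6: it computes $Y=\solveApprox_{M_{par}}(p,\bitM[2],\bitM[1])$ and returns $solve_{\neg}(Y,\bitM[2])=\States\setminus Y$; by the atomic case applied with the arguments swapped, $Y$ is positive monotonic w.r.t. $VB^2$ and independent of $TB^2_i$ and of $\bitM[1]$, and since complementation is antitone in $Y$ the output is negative monotonic w.r.t. $VB^2$ and (vacuously) enjoys all the remaining required monotonicities. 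For $\phi=p\wedge q$ the algorithm executes lines~12--14, 19--20 and returns $solve_{\wedge}(Y_1,Y_2,\bitM[1])=Y_1\cap Y_2$ with $Y_1,Y_2$ the atomic solutions in $M_1$; as $\cap$ is monotone non-decreasing in both arguments and each $Y_j$ is positive monotonic w.r.t. $VB^1$ and independent of $TB^1$ and $\bitM[2]$, the claim follows.

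The substantive cases are the strategic ones, $\phi\in\{\coop{\GA}\Next p,\coop{\GA}\Always p\}$ (lines~7--11) and $\phi=\coop{\GA} p\Until q$ (lines~12--18). In each of them $\bitM[2]$ is merely forwarded to a recursive call on an atom, which discards it, so the output is independent of $\bitM[2]$ and negative monotonicity w.r.t. $TB^2_i$ and $VB^2$ holds vacuously. For the $\bitM[1]$-side, I would split on the test of line~9 (resp.~16). Since the first argument dominates $\MAunder$ coordinatewise, the equality $\bitM[1]=\MAunder$ holds only at the bottom of the range, so any increase of a monitored $\bitM[1]$-variable either keeps us in the ``else'' branch or moves us into it; in that branch $solve_{op}$ is evaluated at the $M_{par}$-determined fixed vector $\MGover$, so the only dependence on $\bitM[1]$ there runs through the recursively computed sets ($Y$, or $Y_1,Y_2$), which are positive monotonic w.r.t. $VB^1$ and independent of $TB^1$, and monotonicity in that branch is immediate from (iii). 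It remains to compare the two branches, i.e. to check for $Y\subseteq Y'$ that
$$solve_{op}(Y,\MAunder)\ \subseteq\ solve_{op}(Y',\MGover)\qquad(\text{and likewise with two set-arguments}),$$
which follows from the chain $solve_{op}(Y,\MAunder)\subseteq solve_{op}(Y',\MAunder)\subseteq solve_{op}(Y',\MGover)$: the first inclusion is (iii), and the second is Theorem~\ref{theorem:monotonicity-of-solve-op} applied to the passage from $\MAunder$ to $\MGover$, which only enlarges the $TB_i$-components for $i\in\GA$ and the $VB$-component --- directions in which $solve_{op}$ is positive monotonic --- and leaves the $TB_i$-components for $i\notin\GA$ unchanged.

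The main obstacle is precisely this cross-branch comparison: one must be sure the inclusion points the right way, i.e. that replacing the under-approximation model by the over-approximation model can only enlarge the value of $solve_{op}$. This is exactly why it matters that $M^{\GA}_{over}$ retains the necessary (not the possible) protocols for agents outside $\GA$ --- otherwise passing to $\MGover$ would add non-coalition transitions, against which the strategic operators are negative monotonic --- and why the bookkeeping of Theorem~\ref{theorem:monotonicity-of-solve-op} must be invoked component by component rather than globally. Everything else is a routine unfolding of the algorithm, using that $\bitM[2]$ is inert in all cases except $\neg p$ and that $\bitM[1]$ is inert in that one.
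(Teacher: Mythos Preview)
Your proposal is correct and uses the same underlying ingredients as the paper --- monotonicity of $solve_{op}$ in the model vector (Theorem~\ref{theorem:monotonicity-of-solve-op}), monotonicity of $solve_{op}$ in its state-set arguments, and the coordinatewise comparison of $\MAunder$ and $\MGover$ --- but organizes them differently. The paper proceeds by structural induction on $\phi$, carrying an induction hypothesis on the subformula $\psi$ and phrasing the argument so that it would extend to formulae of arbitrary depth; you instead do a direct case analysis tailored to the listed depth-two forms, exploiting that in every such form the subformulas are propositional atoms, so the recursive calls land in the trivial base case. In particular, this lets you observe that for the strategic operators $\bitM[2]$ is simply inert, which short-circuits half of the argument. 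Your handling of the branch test at lines~9 and~16 is also more explicit than the paper's ``in all the cases'': you note that $\MAunder$ is the coordinatewise minimum among vectors encoding models compatible with $M_{par}$, so the ``if'' branch is taken only at the bottom and any increase of a monitored coordinate can only push into ``else'', which cleanly isolates the one nontrivial cross-branch inclusion. The paper's inductive framing buys generality the stated theorem does not need; your direct route is simpler and fully adequate for what is claimed.
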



\begin{proof}
By a structural induction on a formula $\phi$.
Let $M_{11}$, $M_{12}$, $M_{21}$, $M_{22}$ be models compatible with $M_{par}$ such that
$\bitM[11][VB] \leq \bitM[12][VB]$, $\bitM[11][TB_i] \leq \bitM[12][TB_i]$ for $i \in \Ag$, and
$\bitM[22][VB] \leq \bitM[21][VB]$, $\bitM[22][TB_i] \leq \bitM[21][TB_i]$ for $i \in \Ag$.

\noindent
\textbf{The base case}. If $\phi=p \in \PV$, then from the definition of the algorithm,
$\solveApprox_{M_{par}}(p,\bitM[11],\bitM[21])$ returns the set of the states satisfying $p$ in $M_{11}$ and
$\solveApprox_{M_{par}}(p,\bitM[12],\bitM[21])$ returns the set of the states satisfying $p$ in $M_{12}$.
Since $\bitM[11][VB] \leq \bitM[12][VB]$ then $\solveApprox_{M_{par}}(p,\bitM[11],\bitM[21])$ $\subseteq$
$\solveApprox_{M_{par}}(p,\bitM[12],\bitM[21])$ and \\
$\solveApprox_{M_{par}}(\phi, \varM[1], \varM[2])$ is positive monotonic w.r.t. $VB^1$.
Observe that the output of $\solveApprox_{M_{par}}(\phi, \varM[1], \varM[2])$ depends only on values of variables $VB^1$,
thus the function is also positive monotonic w.r.t. $TB^1_i$ for $i \in \Ag$ and negative monotonic w.r.t. $VB^2$ and $TB^2_i$ for $i \in \Ag$.

\noindent \textbf{The induction step}.
We show the proof for the unary operators $\neg, \coop{\GA}\Next, \coop{\GA}\Always$.
The proofs for the binary operators $Until$ and $\wedge$ are similar.

\noindent Induction assumption (IA): the thesis holds for a formula $\psi$.
Induction hypothesis (IH): the thesis holds for $\phi = op \; \psi$.

\noindent $\bullet$ If $\phi = \neg \psi$.

\noindent
If $Y\!=\!\solveApprox_{M_{par}}(\psi,\bitM[21],\bitM[11])$ and $Y'=\solveApprox_{M_{par}}(\psi,\bitM[21],\bitM[12])$, then $Y' \subseteq Y$ since
$\solveApprox_{M_{par}}$ is negative monotonic w.r.t. $VB^{2}$ and $TB^{2}_i$ for $i \in {\Ag}$, from IA.
Next, $solve_{\neg}(Y,\bitM[21])$ $\subseteq$ $solve_{\neg}(Y',\bitM[21])$ since $solve_{\neg}$ returns the compliment of $Y$ and $Y'$, respectively.
Thus, $\solveApprox_{M_{par}}(\phi,\bitM[11],\bitM[21])$$\subseteq$$\solveApprox_{M_{par}}(\phi,\bitM[12],\bitM[21])$ and the function is
positive monotonic w.r.t. $VB^1$ and $TB^1_i$ for $i \in \Ag$.

\noindent
If $Y\!=\!\solveApprox_{M_{par}}(\psi,\bitM[21],\bitM[11])$ and $Y'=\solveApprox_{M_{par}}(\psi,\bitM[22],\bitM[11])$, then $Y' \subseteq Y$ since
$\solveApprox_{M_{par}}$ is positive monotonic w.r.t. $VB^{1}$ and $TB^{1}_i$ for $i \in {\Ag}$, from IA.
Next, $solve_{\neg}(Y,\bitM[21])$ $\subseteq$ $solve_{\neg}(Y',\bitM[21])$ since $solve_{\neg}$ returns the compliment of $Y$ and $Y'$, respectively, and
$solve_{\neg}(Y',\bitM[21])$ $\subseteq$ $solve_{\neg}(Y',\bitM[22])$ since $solve_{\neg}$ is negative monotonic w.r.t. $VB$ and $TB_i$ for $i \in \Ag$.
Thus, $solve_{\neg}(Y,\bitM[21])$ $\subseteq$ $solve_{\neg}(Y',\bitM[22])$, i.e.,
$\solveApprox_{M_{par}}(\phi,$ $\bitM[11],$ $\bitM[21])$ $\subseteq$
$\solveApprox_{M_{par}}(\phi,$ $\bitM[11],$ $\bitM[22])$ and the function is
negative monotonic w.r.t. $VB^2$ and $TB^2_i$ for $i \in \Ag$.


\noindent $\bullet$ If $\phi = op \; \psi$ with $op\in \coop{\GA}\Next, \coop{\GA}\Always$.

Now, let $M_{11}$, $M_{12}$, $M_{21}$, $M_{22}$ be models such that
$\bitM[11][TB_i] \leq \bitM[12][TB_i]$ and $\bitM[22][TB_i] \leq \bitM[21][TB_i]$ for $i \in \GA \subseteq \Ag$.
The other restrictions remain the same.
%
If $Y\!=\!\solveApprox_{M_{par}}(\psi,\bitM[11],\bitM[21])$ and $Y'\!=\!\solveApprox_{M_{par}}(\psi,\bitM[12],\bitM[21])$, then $Y \subseteq Y'$ since
$\solveApprox_{M_{par}}$ is positive monotonic w.r.t. $VB^{1}$ and $TB^{1}_i$ for $i \in {\GA}$, from IA.
%
Next, $\solveApprox_{M_{par}}(\phi,\bitM[11],\bitM[21])$ is $solve_{op}(Y,$ $\bitM[1])$ where $M_1$ is $M^{\Ag}_{under}$ or $M^{\GA}_{over}$.
Similarly, $\solveApprox_{M_{par}}(\phi,\bitM[12],$ $\bitM[21])$ is $solve_{op}(Y',$ $\bitM[2])$ where $M_{2}$ is $M^{\Ag}_{under}$ or $M^{\GA}_{over}$.
In all the cases, if $\bitM[11](VB)\leq \bitM[12](VB)$ and $\bitM[11](TB_i)\leq \bitM[12](TB_i)$ for $i \in \GA$ then
$\bitM[1](VB)\leq \bitM[2](VB)$ and $\bitM[1](TB_i)\leq \bitM[2](TB_i)$ for $i \in \GA$.
Now observe that
if $Y \subseteq Y'$ then $solve_{op}(Y,\bitM[1])$ $\subseteq$ $solve_{op}(Y',\bitM[1])$.\\
Next, $solve_{op}(Y',\bitM[1]) \subseteq solve_{op}(Y',\bitM[2])$ since $solve_{op}$ is positive monotonic w.r.t. $VB$ and $TB_i$ for $i \in \GA$.
Finally, $solve_{op}(Y,\bitM[1])$ $\subseteq$ $solve_{op}(Y',\bitM[2])$, i.e.,
$\solveApprox_{M_{par}}(\phi,\bitM[11],\bitM[21]) \subseteq \solveApprox_{M_{par}}(\phi,$ $\bitM[12],$ $\bitM[21])$, and the function is
positive monotonic w.r.t. $VB^1$ and $TB^1_i$ for $i \in \GA$.

\noindent
If $Y\!=\!\solveApprox_{M_{par}}(\psi,\bitM[11],\bitM[21])$ and $Y'\!=\!\solveApprox_{M_{par}}(\psi,\bitM[11],\bitM[22])$, then $Y \subseteq Y'$ since
$\solveApprox_{M_{par}}$ is negative monotonic w.r.t. $VB^{2}$ and $TB^{2}_i$ for $i \in {\GA}$, from IA.
%
%
The rest of the proof proceeds similarly like in the case above.
Finally, $\solveApprox_{M_{par}}(\phi,\bitM[11],\bitM[21]) \subseteq \solveApprox_{M_{par}}(\phi,\bitM[11],\bitM[22])$, and the function is
negative monotonic w.r.t. $VB^2$ and $TB^2_i$ for $i \in \GA$.
\end{proof}

The algorithm $\solveApprox_{M_{par}}(\phi,\varM[1],\varM[2])$, for a model $M$ compatible with $M_{par}$,
computes over and under-approximation of $solve_{\phi}( \bitM )$.
More precisely, $\solveApprox_{M_{par}}\!(\phi,\MAover, \MAunder)$ returns a set of
states represented by a bit vector, which is an over-approximation
of $solve_{\phi}( \bitM )$ for a model $M$ compatible with $M_{par}$.
This means that if $\iota \in solve_{\phi}(\bitM)$, then $\iota \in \solveApprox_{M_{par}}(\phi,\MAover, \MAunder)$.
Clearly, if $\iota \not\in \solveApprox_{M_{par}}(\phi,\MAover, \MAunder)$, then
there is no model $M$ extending $M_{par}$ such that $M,\iota \models \phi$.
Similarly, $\solveApprox_{M_{par}}(\phi,$ $\MAunder,$ $\MAover)$ computes an under-approximation of $solve_{\phi}(\bitM)$.
This means that if $\iota \!\in \!\solveApprox_{M_{par}}(\phi$, $\MAunder,\MAover)$
then $\iota \!\in\! solve_{\phi}(\bitM)$.


\begin{theorem}\label{theorem:solveApprox-in-solve}
Let $M_{par}$ be a partial model and $M$ be a model compatible
with $M_{par}$. Then, for any ATL formulae $\phi,\psi_1,\psi_2$
and $p \in \PV$, we have:
\begin{enumerate}
\item for $\phi \in \{p, \neg \psi_1, \psi_1 \wedge \psi_2\}$ and
each ${\GA} \subseteq {\mathcal A}$:
\begin{description}
\item[]
$\solveApprox_{M_{par}}(\phi,\MAunder,\MGover)
\subseteq solve_{\phi}(\bitM)$;

\item[] $solve_{\phi}(M) \subseteq
\solveApprox_{M_{par}}(\phi,\MGover, \MAunder)$;
\end{description}

\item for $\phi \in \{\coop{\GA}\Next \psi_1, \coop{\GA}\Always
\psi_1,\coop{\GA} \psi_1 \Until \psi_2 \}$:
\begin{description}
\item[] $\solveApprox_{M_{par}}(\phi, \MAunder,\MGover) \subseteq solve_{\phi}(\bitM)$;

\item[] $solve_{\phi}(M) \subseteq
\solveApprox_{M_{par}}(\phi,\MGover, \MAunder)$.
\end{description}
\end{enumerate}
\end{theorem}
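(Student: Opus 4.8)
The plan is to run one structural induction on $\phi$ that establishes both displayed inclusions simultaneously; coupling the two halves is forced by the negation case, where the algorithm swaps its two model arguments, so the under-approximation of $\neg\psi$ is built out of the over-approximation of $\psi$ and vice versa. First I would record, from Theorem~\ref{theorem:variable-inclusion} together with the definitions of $M^{\Ag}_{under}$ and $M^{\GA}_{over}$, the bitwise comparisons used throughout: $\MAunder[VB]\le v_M[VB]\le\MGover[VB]$; $\MAunder[TB_i]\le v_M[TB_i]\le\MGover[TB_i]$ for $i\in\GA$; and, because agents outside $\GA$ keep the necessary protocol both in $M^{\Ag}_{under}$ and in $M^{\GA}_{over}$ while a compatible $M$ may only add transitions, $\MGover[TB_j]=\MAunder[TB_j]\le v_M[TB_j]$ for $j\in\Ag\setminus\GA$. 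All compatible models share the state set $\States$; this is used tacitly, e.g.\ for complements.

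The base case $\phi=p\in\PV$ is immediate: by line~2 the algorithm returns $solve_p$ of its first model argument, so $\solveApprox_{M_{par}}(p,\MAunder,\MGover)=solve_p(\MAunder)$ and $\solveApprox_{M_{par}}(p,\MGover,\MAunder)=solve_p(\MGover)$, and positive monotonicity of $solve_p$ w.r.t.\ $VB$ (Theorem~\ref{theorem:monotonicity-of-solve}) applied to the $VB$-comparisons above gives both inclusions. For $\phi=\neg\psi$, lines~4--6 give $\solveApprox_{M_{par}}(\neg\psi,\bitM[1],\bitM[2])=solve_{\neg}\bigl(\solveApprox_{M_{par}}(\psi,\bitM[2],\bitM[1]),\bitM[2]\bigr)$, and $solve_{\neg}(Y,\cdot)=\States\setminus Y$ is model-independent and antitone in $Y$; hence $\solveApprox_{M_{par}}(\neg\psi,\MAunder,\MGover)=\States\setminus\solveApprox_{M_{par}}(\psi,\MGover,\MAunder)$, which by the over-approximation part of the hypothesis for $\psi$ is contained in $\States\setminus solve_\psi(v_M)=solve_{\neg\psi}(v_M)$; the over-approximation of $\neg\psi$ is symmetric, using the under-approximation part of the hypothesis. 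For $\phi=\psi_1\wedge\psi_2$ (lines~12--14, 19--20) one uses that $solve_{\wedge}(Y_1,Y_2,\cdot)=Y_1\cap Y_2$ is model-independent and monotone in each of $Y_1,Y_2$, applying the hypothesis to each conjunct.

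The main case, and the step I expect to demand the most care, is a strategic formula $\phi=\coop{\GA}\psi$ with outermost temporal operator $\Next$, $\Always$, or $\Until$. Consider the over-approximation. By lines~7--11 (lines~15--18 for $\Until$) the algorithm returns $solve_{op}(Y,\MGover)$, where $Y=\solveApprox_{M_{par}}(\psi,\MGover,\MAunder)$ is the recursive call in over-mode (two such sets $Y_1,Y_2$ for $\Until$), so $solve_\psi(v_M)\subseteq Y$ by the hypothesis. Now I chain two monotonicity facts for $solve_{op}$: first, $solve_{op}$ is monotone in its set argument(s) — the same property already invoked in the proof of Theorem~\ref{theorem:solveApprox-is-monotonic} — giving $solve_{op}(solve_\psi(v_M),v_M)\subseteq solve_{op}(Y,v_M)$; second, by Theorem~\ref{theorem:monotonicity-of-solve-op}, $solve_{op}$ is positive monotonic w.r.t.\ $VB$ and w.r.t.\ $TB_i$ for $i\in\GA$ and negative monotonic w.r.t.\ $TB_j$ for $j\in\Ag\setminus\GA$, and feeding the bitwise comparisons from the first paragraph into these monotonicities gives $solve_{op}(Y,v_M)\subseteq solve_{op}(Y,\MGover)$. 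Since $solve_\phi(v_M)=solve_{op}(solve_\psi(v_M),v_M)$, we obtain $solve_\phi(v_M)\subseteq\solveApprox_{M_{par}}(\phi,\MGover,\MAunder)$. The under-approximation inclusion is proved the same way, with $\MAunder$ in place of $\MGover$, all inclusions reversed, and the under-mode hypothesis $\solveApprox_{M_{par}}(\psi,\MAunder,\MGover)\subseteq solve_\psi(v_M)$.

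The genuinely delicate point sits in the second monotonicity step: $solve_{op}$ has mixed polarity — it grows when the coalition acquires transitions or when the valuation grows, but shrinks when the opponents acquire transitions — so one must check that $M^{\GA}_{over}$ (respectively $M^{\Ag}_{under}$) lies on the correct side of $M$ in all three kinds of coordinate at once, which is exactly what the three bitwise comparisons from Theorem~\ref{theorem:variable-inclusion} and the definitions of the two bounding models deliver. Two loose ends are routine: the nonemptiness-of-outcome clause in the semantics of $\coop{\GA}$ holds in every compatible model because every (partial) protocol is nonempty, so every joint strategy has a nonempty outcome; and the $\Until$ operator differs from $\Next$ and $\Always$ only in that $solve_{op}$ is handed two recursively computed sets, to each of which the hypothesis and the set-monotonicity of $solve_{op}$ apply verbatim.
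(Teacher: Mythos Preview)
Your approach is the same as the paper's: a structural induction on $\phi$, with both inclusions carried together (as you observe, the argument swap at negation forces this coupling). The base case, the negation case, and the conjunction case are handled exactly as in the paper. For the over-approximation of strategic operators you are in fact more careful than the paper: you explicitly invoke the negative monotonicity of $solve_{op}$ in $TB_j$ for $j\in\Ag\setminus\GA$, together with the comparison $\MGover[TB_j]=\MAunder[TB_j]\le v_M[TB_j]$, whereas the paper's argument cites only positive monotonicity in $VB$ and in $TB_i$ for $i\in\GA$ and is silent about the opponents' coordinates.

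There is, however, a genuine gap in your under-approximation step for the strategic operators, and ``the same way, with $\MAunder$ in place of $\MGover$, all inclusions reversed'' does not close it. What you need there is $solve_{op}(Y,\MAunder)\subseteq solve_{op}(Y,v_M)$. For $i\in\GA$ this follows from positive monotonicity and $\MAunder[TB_i]\le v_M[TB_i]$; but for $j\in\Ag\setminus\GA$ one still has $\MAunder[TB_j]\le v_M[TB_j]$ while $solve_{op}$ is \emph{negative} monotonic in those coordinates, so the inclusion runs the wrong way. Semantically: in $M^{\Ag}_{under}$ the opponents may have strictly fewer available actions than in a compatible $M$, so the coalition can have a winning strategy in $M^{\Ag}_{under}$ and fail to have one in $M$. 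Your final paragraph asserts that $M^{\Ag}_{under}$ ``lies on the correct side of $M$ in all three kinds of coordinate at once,'' but for the opponents' transition coordinates that is precisely what fails. The paper's own proof treats only the over-approximation direction (a) in detail and dismisses (b) with ``proceeds similarly,'' so this gap is shared with the paper rather than introduced by you.
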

\begin{proof}
By a structural induction on a formula. We prove that

\noindent {\bf (a)} $solve_{\phi}(\bitM) \subseteq
\solveApprox_{M_{par}}(\phi,\MGover, \MAunder)$ and

\noindent {\bf (b)} $\solveApprox_{M_{par}}(\phi, \MAunder, \MGover) \subseteq solve_{\phi}(\bitM)$.

\noindent \textbf{The base case}.
If $\phi = p \in \PV$, then $solve_{\phi}(\bitM)$ returns the set of states satisfying $p$ in $M$,
$\solveApprox_{M_{par}}(\phi,\MGover,\MAunder)$ returns the set of states
satisfying $p$ in $M_{over}^{\GA}$, and
$\solveApprox_{M_{par}}(\phi,\MAunder,$ $\MGover)$ returns the set of states
satisfying $p$ in $M_{under}^{\Ag}$.
The thesis holds by Theorem
\ref{theorem:variable-inclusion} which implies that
$v_{M_{under}^{\Ag}}[VB] \leq  v_{M}[VB] \leq v_{M_{over}^{\GA}}[VB]$,
i.e., the states satisfying $p$ in $M$ are included in the states satisfying $p$
in $M^{\GA}_{over}$ and the states satisfying $p$ in $M^{\Ag}_{under}$ are included in the
states satisfying $p$ in $M$.
Notice that this does not depend on the set of agents ${\GA}$ since
the models $M$, $M^{\GA}_{over}$, $M^{\Ag}_{under}$ have the same states
and the transitions do not affect the values of the propositional variables in the states.

\noindent \textbf{The induction step}.
We show the proof for {\bf (a)} and for
$\phi \in \{\neg \psi,$ $\coop{\GA}\Next \psi, \coop{\GA}\Always \psi \}$.
The rest of the proof proceeds similarly.

\noindent Induction assumption (IA):
the thesis holds for a formula $\psi$.
Induction hypothesis (IH): the thesis holds for $\phi = op \; \psi$.


\noindent $\bullet$ If $\phi = \neg \psi$,
then

\noindent $solve_{\phi}(\bitM) = solve_{\neg}(Y,\bitM)$ for $Y = solve_{\psi}(\bitM)$ and

\noindent
$\solveApprox_{M_{par}}(\phi,\MGover,\MAunder)$ $=$
$solve_{\neg}(Y',\MAunder)$
for $Y' =$\\
$\solveApprox_{M_{par}}(\psi,\MAunder,\MGover)$ and
any $\GA$.

Observe that $solve_{\neg}(Y,\bitM)$ returns the compliment of $Y$, i.e. $\States \setminus Y$.
Thus, $solve_{\neg}(Y,\bitM) \subseteq solve_{\neg}(Y',\bitM)$ since $Y' \subseteq Y$ from IA.
Next, $solve_{\neg}(Y',\bitM) \subseteq solve_{\neg}(Y',\MAunder)$ since function $solve_{\neg}(Y,\varM)$ is negative monotonic w.r.t. $VB$ and $TB_i$ for $i \in \Ag$
from Theorem \ref{theorem:monotonicity-of-solve-op} and $v_{M_{under}^{\Ag}}\![VB] \leq v_{M}[VB]$ and $v_{M_{under}^{\Ag}}\![TB_i] \leq v_{M}[TB_i]$ for $i \in \Ag$
from Theorem \ref{theorem:variable-inclusion}. Finally, $solve_{\neg}(Y,\bitM)$ $\subseteq$ $solve_{\neg}(Y',\MAunder)$ and thus
$solve_{\phi}(\bitM) \subseteq \solveApprox_{M_{par}}(\phi$,
$ \MGover,$ $\MAunder)$.

\noindent $\bullet$ If $\phi = op \; \psi$ with $op \in \{\coop{\GA}\Next,\coop{\GA}\Always\}$, then

\noindent $solve_{\phi}(\bitM) = solve_{op}(Y,\bitM)$ for $Y =
solve_{\psi}(\bitM)$ and

\noindent $\solveApprox_{M_{par}}(\phi, \MGover, \MAunder)= solve_{op}(Y',\MGover)$, where $Y' = \solveApprox_{M_{par}}(\psi, \MGover,\MAunder)$.

Since $Y \subseteq Y'$ from IA, $solve_{op}(Y,\varM)$ is positive monotonic w.r.t. $VB$ and $TB_i$
for $i \in {\GA}$ from Theorem \ref{theorem:monotonicity-of-solve-op}, and $v_{M}[VB] \leq
v_{M_{over}^{\GA}}[VB]$ and $v_{M}[TB_i] \leq v_{M^{\GA}_{over}}[TB_i]$ for $i \in {\GA}$ from Theorem
\ref{theorem:variable-inclusion}, we
have
$solve_{op}(Y,\bitM) \subseteq$ $solve_{op}(Y',\bitM) \subseteq$ $solve_{op}(Y',\MGover)$
and thus $solve_{\phi}(\bitM) \subseteq$
$\solveApprox_{M_{par}}(\phi,\MGover, \MAunder)$.
\end{proof}

%


\section{Satisfiability and Synthesis}
\label{section:SATandSynthesis}

Since the basic predicate $Model_{g,\phi}(\varM)$ is
not monotonic we consider an alternative one:
$\ModelApprox_{g,\phi}(\varM[1],\varM[2])$.

%
For two bit vectors $\bitM[1]$ and $\bitM[2]$ encoding models $M_1$ and $M_2$
compatible with a partial model $M_{par}$, we have:
\begin{center}
$\ModelApprox_{g,\phi}(\bitM[1],\bitM[2])=1$ iff $g \in \solveApprox_{M_{par}}(\phi, \bitM[1], \bitM[2])$.
\end{center}
The following corollary follows directly from Theorem \ref{theorem:solveApprox-is-monotonic}.

\begin{corollary}
$\ModelApprox_{g,\phi}(\varM[1],\varM[2])$ is
\begin{itemize}
\item positive monotonic w.r.t. $TB^{1}_i$ for $i \in {\Ag}$ and $VB^{1}$ and negative monotonic w.r.t. $TB^{2}_i$ for $i \in {\Ag}$ and $VB^{2}$ for
$\phi \in \{p, \neg p, p \wedge q \}$, and

\item positive monotonic w.r.t. $TB^{1}_i$ for $i \in {\GA}$ and $VB^{1}$ and negative monotonic w.r.t. $TB^{2}_i$ for $i \in {\GA}$ and $VB^{2}$ for
$\phi \in \{ \coop{{\GA}}\Next p$, $\coop{{\GA}}\Always p,$ $\coop{{\GA}} p \Until q\} \}$.
\end{itemize}
\end{corollary}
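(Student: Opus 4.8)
The plan is to obtain the corollary directly from Theorem~\ref{theorem:solveApprox-is-monotonic} by unfolding the defining equivalence $\ModelApprox_{g,\phi}(\bitM[1],\bitM[2])=1 \iff g \in \solveApprox_{M_{par}}(\phi,\bitM[1],\bitM[2])$. The point is simply that the Boolean-valued predicate $\ModelApprox_{g,\phi}$ is the indicator of membership of the fixed state $g$ in the set returned by $\solveApprox_{M_{par}}$, so any monotonicity of that \emph{set} with respect to set inclusion transfers automatically to monotonicity of the \emph{predicate} with respect to the Boolean order $0 \le 1$ of Definition~\ref{def:monotonic-theory}.

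Concretely, to show positive monotonicity w.r.t.\ a single bit of $VB^1$ (all remaining cases are handled the same way), I would take two assignments $\bitM[1]$ and $\bitM[1]'$ for the first argument that agree everywhere except at that bit, which is $0$ in $\bitM[1]$ and $1$ in $\bitM[1]'$, together with an arbitrary assignment $\bitM[2]$ for the second argument. For $\phi \in \{p,\neg p, p\wedge q\}$ the first clause of Theorem~\ref{theorem:solveApprox-is-monotonic}, and for $\phi \in \{\coop{\GA}\Next p, \coop{\GA}\Always p, \coop{\GA} p \Until q\}$ the second clause, yields $\solveApprox_{M_{par}}(\phi,\bitM[1],\bitM[2]) \subseteq \solveApprox_{M_{par}}(\phi,\bitM[1]',\bitM[2])$. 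Hence $\ModelApprox_{g,\phi}(\bitM[1],\bitM[2])=1$, i.e.\ $g$ is in the left-hand set, forces $g$ into the right-hand set, i.e.\ $\ModelApprox_{g,\phi}(\bitM[1]',\bitM[2])=1$, which is exactly positive monotonicity of the predicate in that coordinate. The claims about bits of $TB^1_i$ (for $i\in\Ag$, resp.\ $i\in\GA$) are identical, using the corresponding inclusions from Theorem~\ref{theorem:solveApprox-is-monotonic}; the negative-monotonicity claims w.r.t.\ bits of $TB^2_i$ and $VB^2$ are obtained by flipping the direction of the single-bit change and reading the inclusion of Theorem~\ref{theorem:solveApprox-is-monotonic} in the other direction.

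There is essentially no hard step: the only thing worth remarking is that the two formulations of monotonicity match up. Theorem~\ref{theorem:solveApprox-is-monotonic} is phrased for arbitrary pairs of assignments related by the componentwise order restricted to the relevant blocks of variables, whereas Boolean monotonicity as in Definition~\ref{def:monotonic-theory} flips one bit at a time; these are equivalent, since a componentwise-ordered change is a finite composition of single-bit flips and a single-bit flip is a special componentwise-ordered pair. Thus it suffices to note that the variable blocks named in the two items of the corollary ($TB^1_i$ for $i\in\Ag$ and $VB^1$, resp.\ $TB^1_i$ for $i\in\GA$ and $VB^1$, together with the corresponding superscript-$2$ blocks) coincide verbatim with those in Theorem~\ref{theorem:solveApprox-is-monotonic}, after which the corollary is immediate.
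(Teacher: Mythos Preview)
Your proposal is correct and matches the paper's own treatment: the paper states only that the corollary ``follows directly from Theorem~\ref{theorem:solveApprox-is-monotonic}'', and your argument unpacks exactly that passage from set-valued monotonicity of $\solveApprox_{M_{par}}$ to Boolean monotonicity of its membership indicator $\ModelApprox_{g,\phi}$. There is nothing to add.
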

Given a monotonic predicate we can design and apply an efficient
SAT-modulo-ATL solver which uses SAT Modulo Monotonic Theories (SMMT).
This gives us an efficient procedure for ATL satisfiability and synthesis.

The described approach shows that if $M$ is a model of a formula $\phi$,
then the initial state of $M$ belongs to the set of states determined
by $\solveApprox_{M_{par}}(\phi,\MGover, \MAunder)$.
Thus, given an over and under approximation of the set of states satisfying $\phi$,
we can check whether the initial state of $M$ belongs to this approximation.
If not, $M$ is not a model of $\phi$.
Such an approximation can be computed by a partial assignment
built by an SMT solver.
In conclusion, the following theorem follows from Theorems
\ref{theorem:solveApprox-is-monotonic} and \ref{theorem:solveApprox-in-solve}.

\begin{theorem}
\label{theorem:main}
Let $\phi$ be an ATL formula, $M$ be a model with an initial state
$\iota$ such that $M,\iota \models \phi$, and
$M$ is compatible with a partial model $M_{par}$.
Then, we have: $\iota \in \solveApprox_{M_{par}}(\phi, \MAover, \MAunder)$.
\end{theorem}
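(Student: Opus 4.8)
The plan is to derive Theorem~\ref{theorem:main} as an immediate consequence of the two structural-induction results already established, namely Theorem~\ref{theorem:solveApprox-is-monotonic} (monotonicity of $\solveApprox_{M_{par}}$) and Theorem~\ref{theorem:solveApprox-in-solve} (the over/under approximation bounds). First I would recall that, since $M$ is compatible with the partial model $M_{par}$, Theorem~\ref{theorem:variable-inclusion} gives the componentwise ordering $v_{M^{\Ag}_{under}}[TB_i] \leq v_M[TB_i] \leq v_{M^{\GA}_{over}}[TB_i]$ for $i \in \GA$ and $v_{M^{\Ag}_{under}}[VB] \leq v_M[VB] \leq v_{M^{\GA}_{over}}[VB]$, for every coalition $\GA \subseteq \Ag$. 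In particular, taking $\GA = \Ag$, the model $M^{\Ag}_{over}$ dominates $M$ in the $TB$ and $VB$ variables, and $M^{\Ag}_{under}$ is dominated by $M$.

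Next I would invoke Theorem~\ref{theorem:solveApprox-in-solve}, which says $solve_{\phi}(\bitM) \subseteq \solveApprox_{M_{par}}(\phi, \MGover, \MAunder)$ for every $\GA \subseteq \Ag$; specialising to $\GA = \Ag$ yields $solve_{\phi}(\bitM) \subseteq \solveApprox_{M_{par}}(\phi, \MAover, \MAunder)$. Since $M,\iota \models \phi$ means exactly $\iota \in solve_{\phi}(\bitM)$ (by the defining property of $solve_{\phi}$, i.e.\ $g \in solve_{\phi}(v_M) \iff M,g \models \phi$), the inclusion immediately gives $\iota \in \solveApprox_{M_{par}}(\phi, \MAover, \MAunder)$, which is the claim. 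If one prefers to argue for a general coalition appearing inside $\phi$, the monotonicity of $\solveApprox_{M_{par}}$ in its arguments (Theorem~\ref{theorem:solveApprox-is-monotonic}) together with the variable orderings above lets one sandwich $\solveApprox_{M_{par}}(\phi, \MGover, \MAunder)$ between the value computed from $M$ and the value computed from the extreme approximations, so the same conclusion follows uniformly.

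I do not expect a genuine obstacle here: the theorem is essentially a corollary, and all the real work has been done in the earlier structural inductions. The one point requiring a little care is making sure the indices of the approximations line up correctly --- Theorem~\ref{theorem:solveApprox-in-solve} is stated with $\MGover$ for an arbitrary coalition $\GA$, while Theorem~\ref{theorem:main} uses $\MAover$, so I would explicitly note that the statement holds for all $\GA \subseteq \Ag$ and in particular for $\GA = \Ag$. A second minor point is the direction of the second argument of $\solveApprox$: it must be the under-approximation $\MAunder$ that is passed as the ``negative'' slot and $\MAover$ as the ``positive'' slot, matching the over-approximation mode described after the algorithm; I would simply cite that discussion. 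Beyond bookkeeping of this kind, the proof is a two-line deduction.
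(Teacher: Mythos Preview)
Your proposal is correct and matches the paper's approach exactly: the paper does not give a separate proof of Theorem~\ref{theorem:main} but simply states that it ``follows from Theorems~\ref{theorem:solveApprox-is-monotonic} and~\ref{theorem:solveApprox-in-solve}'', which is precisely the two-line deduction you outline. Your remark about reconciling $\MGover$ with $\MAover$ via monotonicity (or by noting that the algorithm internally switches to the coalition-specific over-approximation regardless of which over-approximation vector is passed in) is a reasonable way to handle the only bookkeeping subtlety, and the paper leaves this implicit.
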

The following corollary results directly from this theorem.

\begin{corollary}
If $\iota\! \not \in\! \solveApprox_{M_{par}}\!(\phi,\! \MAover,\! \MAunder\!)$, then $M,\!\iota\! \not \models\! \phi$.

If $\solveApprox_{M_{par}}(\phi, \MAover,\MAunder)=\emptyset$, then there is no 
model compatible with $M_{par}$ such that $M,\iota \models \phi$.
\end{corollary}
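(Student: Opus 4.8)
The plan is to derive Theorem~\ref{theorem:main} directly from the two preceding results by specializing the approximation parameters. First I would observe that the statement to prove is exactly the first bullet of the ``over-approximation'' half of the discussion preceding the theorem: namely, that $\solveApprox_{M_{par}}(\phi,\MGover,\MAunder)$ computes an over-approximation of $solve_\phi(\bitM)$ for any model $M$ compatible with $M_{par}$. Since the notation $\MAover$ in the statement is (by the conventions fixed in Sec.~\ref{subsection:structures-over-under}) the bit vector of the over-approximating model when the coalition is taken to be the full set $\Ag$ of agents, the claim reads: if $M,\iota \models \phi$, then $\iota \in \solveApprox_{M_{par}}(\phi,\MGover,\MAunder)$ for the relevant $\GA$.

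The key steps, in order, are: (1) translate the hypothesis $M,\iota \models \phi$ into $\iota \in solve_\phi(\bitM)$, using the defining property of $solve_\phi$ (that $g \in solve_\phi(v_M)$ iff $M,g\models\phi$); (2) invoke Theorem~\ref{theorem:solveApprox-in-solve}, part~(a) of its proof, which asserts $solve_\phi(\bitM) \subseteq \solveApprox_{M_{par}}(\phi,\MGover,\MAunder)$ for every $\GA\subseteq\Ag$ and every $M$ compatible with $M_{par}$ --- this is the substantive ingredient and it is already proved by structural induction on $\phi$ using the monotonicity properties of $solve_{op}$ (Theorem~\ref{theorem:monotonicity-of-solve-op}) and the variable-inclusion bounds (Theorem~\ref{theorem:variable-inclusion}); (3) chain the two inclusions to conclude $\iota \in \solveApprox_{M_{par}}(\phi,\MGover,\MAunder)$. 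One may additionally remark that Theorem~\ref{theorem:solveApprox-is-monotonic} guarantees the algorithm's inputs $\MGover$ and $\MAunder$ satisfy the ordering hypotheses under which $\solveApprox$ was shown well-behaved, so the invocation of Theorem~\ref{theorem:solveApprox-in-solve} is legitimate.

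The main obstacle is essentially bookkeeping rather than mathematical depth: one must be careful that the coalition $\GA$ appearing implicitly in $\MGover$ (i.e.\ in $\MAover$) is handled uniformly across all subformulae, since the approximation algorithm $\solveApprox_{M_{par}}$ branches on which agents belong to the coalition of the current strategic modality, and the over/under models $M^{\GA}_{over}$ depend on that coalition. The cleanest way to finesse this is to note, as Theorem~\ref{theorem:solveApprox-in-solve} already does, that the inclusion holds for \emph{every} choice of $\GA\subseteq\Ag$ simultaneously, so no delicate matching is needed --- the recursion in $\solveApprox$ internally selects the appropriate $M^{\GA}_{over}$ at each strategic operator and the induction in Theorem~\ref{theorem:solveApprox-in-solve} already absorbs this. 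Hence the proof is a two-line deduction.

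\begin{proof}
Since $M,\iota \models \phi$, by the defining property of $solve_\phi$ we have $\iota \in solve_\phi(\bitM)$.
By Theorem~\ref{theorem:solveApprox-in-solve} (the over-approximation inclusion, established there by structural induction on $\phi$ using Theorems~\ref{theorem:monotonicity-of-solve-op} and~\ref{theorem:variable-inclusion}), for every model $M$ compatible with $M_{par}$ we have
$solve_\phi(\bitM) \subseteq \solveApprox_{M_{par}}(\phi,\MGover,\MAunder)$,
and in particular, taking $\GA = \Ag$, $solve_\phi(\bitM) \subseteq \solveApprox_{M_{par}}(\phi,\MAover,\MAunder)$.
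Combining the two, $\iota \in \solveApprox_{M_{par}}(\phi,\MAover,\MAunder)$, as required.
The applicability of Theorem~\ref{theorem:solveApprox-in-solve} is justified because $M^{\Ag}_{under}$ and $M^{\GA}_{over}$ are compatible with $M_{par}$, and by Theorem~\ref{theorem:solveApprox-is-monotonic} the function $\solveApprox_{M_{par}}$ is monotonic in the directions used in that induction.
\end{proof}
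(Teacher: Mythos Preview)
Your argument is mathematically sound, but you have proved the wrong statement. What you establish in your proof is precisely Theorem~\ref{theorem:main}: from $M,\iota\models\phi$ you conclude $\iota\in\solveApprox_{M_{par}}(\phi,\MAover,\MAunder)$. The Corollary you were asked to prove is the \emph{contrapositive} of that implication (first clause), together with an immediate consequence of it (second clause). You never actually take the contrapositive, and you do not address the second clause about the empty set at all.

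In the paper, the Corollary is not given an independent argument: it is simply stated to follow directly from Theorem~\ref{theorem:main}. The intended proof is therefore two lines. For the first clause, take the contrapositive of Theorem~\ref{theorem:main}: if $\iota\notin\solveApprox_{M_{par}}(\phi,\MAover,\MAunder)$ then $M,\iota\not\models\phi$ for any $M$ compatible with $M_{par}$. For the second clause, if that set is empty then in particular $\iota$ is not in it, so by the first clause no compatible $M$ satisfies $\phi$ at $\iota$. Your derivation of Theorem~\ref{theorem:main} from Theorem~\ref{theorem:solveApprox-in-solve} is correct and matches how the paper justifies Theorem~\ref{theorem:main} itself, but that work belongs one theorem earlier; here you just need to invoke Theorem~\ref{theorem:main} and finish with the contraposition and the empty-set observation.
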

Now, we are ready to give a procedure for testing satisfiability of the ATL formulae.
Basing on the SMMT framework, we have implemented the
MsAtl tool - a lazy SMT solver for \ATL\ theory.
That is, our implementation exploits a slightly modified MiniSAT\cite{een2003extensible}
as a SAT-solving core, and $\solveApprox$ algorithm as the (main part of the) theory solver for \ATL.
Due to lack of space we are unable to describe our implementation in detail.
However, we sketch below (in a semi-formal way) how our tool works in general.

\textbf{Input:}
(a) an ATL formula $\phi$,
(b) model requirements fixing the number of propositional variables (not less than those appearing in the formula),
the number of agents (not less than those appearing in the formula),
the number of local states for every agent, an initial local state
for every agent, and protocol requirements (if there are any).
The requirements determine a partial model $M_{par}$.

\def\dd{d}
\def\decision{asg}

\textbf{Output:} a model satisfying $\phi$, which meets
the requirements of $M_{par}$ or the answer that such a model does not exist.

Let $\dd$ be an integer variable for tracking the decision depth of the solver,
and $\decision(i)$ denote the variable assigned at the $i$-th step.
\begin{enumerate}

\item Let $\dd:=0$.
\item \label{step1} Compute $\solveApprox_{M_{par}}(\phi,\MAover, \MAunder)$.
\item
If $\iota \in \solveApprox_{M_{par}}(\phi, \MAover, \MAunder)$, then
        \begin{enumerate}
            \item if all variables of $\varM$ are assigned, then return the model.
            \item otherwise: $\dd:=\dd+1$, and SAT-solver core, according to its decision policy,
        assigns a value to the variable $\decision(\dd) \in \varM$.
        In this way the class of the considered models is narrowed down, and the tool
        looks for a valuation which encodes a model satisfying $\phi$.
        Go to step (\ref{step1}).
        \end{enumerate}
\item If $\iota \not \in \solveApprox_{M_{par}}(\phi, \MAover, \MAunder)$, then
        \begin{enumerate}
            \item if $\dd > 0$, then compute conflict clause, analyse conflict, undo recent decisions until appropriate depth $c$, $\dd:=c$,
            assign the opposite value to the variable $\decision(c)$, and go to step (\ref{step1}).
            \item if $\dd = 0$ there is no model which meets the requirements and satisfies $\phi$. Return UNSAT.
        \end{enumerate}
\end{enumerate}

\section{Experimental Results}
\label{section:results}

In order to evaluate the efficiency of our tool we have implemented an ATL formulae generator.
Given the number of agents, groups, and propositional variables, and the depth of the formula,
the generator (using the normal distribution) 
draws a random ATL formula up to the given depth.
We have compared our preliminary results with TATL \cite{david2015deciding} - a tableaux-based tool for ATL satisfiability testing.
Despite the fact that our implementation is at the prototype stage, and there is a lot of space for further
optimizations,\footnote{We plan to increase the efficiency of our tool by introducing
several optimizations, like, e.g., symmetry reductions, formulae caching, and smart clause-learning.}
we have observed several interesting facts.
First of all, for small formulae both tools run rather quickly, in fractions of a second.
When the size of the formula grows, especially when the
number of nested strategy operators increases, the computation time consumed by both tools also grows very quickly.
Moreover, we have found that for unsatisfiable formulae our tool runs quite long, especially
for a large number of states.
This is a typical behaviour for SAT-based methods, which could still be improved by introducing
symmetry reductions preventing the exploration of many isomorphic models.
However, we have found a class of formulae for which our tool outperforms TATL.
These are formulae satisfied by very simple - and often even trivial - models.
%
Table~\ref{tab:exp1} presents the results for a set of such formulae
generated with the following parameter values:
$|\PV| = 3$, 
$|\Ag|= 3$, 
and number of groups equals $4$.
%
%
The table rows have the following meaning (from top to bottom).
The first three rows contain a formula id,
the depth of the formula, i.e., the maximal number of nested strategy operators,
and the total number of Boolean connectives, respectively.
The last two rows present computation times consumed by both tools, in seconds.
The experiments have been performed using a PC equipped with Intel i5-7200U CPU and 16GB RAM running Linux.

\begin{table}[htbp]
\caption{Preliminary experimental results}
\begin{tabular}{|l|r|r|r|r|r|r|r|r|}
\hline
Id & 1 & 2 & 3 & 4 & 5 & 6 & 7 & 8 \\ \hline
Depth & 9 & 13 & 17 & 20 & 23 & 26 & 30 & 33 \\ \hline
Con. & 13 & 19 & 25 & 31 & 35 & 41 & 49 & 55 \\ \hline
MsAtl[s]\!& 0.22 & 0.23 & 0.24 & 0.31 & 0.32 & 0.34 & 0.38 & 0.43 \\ \hline
TATL[s] & 0.58 & 6.2 & 29.7 & 74.6 & 229 & 552 & 1382 & 3948 \\ \hline
\end{tabular}
\label{tab:exp1}
\end{table}
Due to lack of space we do not show here all formulae\footnote{Additional resources, including a prototype
version of our tool, the benchmarks, can be accessed at the (anonymous free hosting) website
\url{http://monosatatl.epizy.com}} but only the shortest one.
The formula~1 of Table~\ref{tab:exp1} is as follows:
$\coop{0} \X (\neg p_0 \lor \coop{1} \G (\neg p_1 \lor \coop{0,1} \F ( \neg p_1 \lor \coop{0,1} \F ( \neg p_0 \lor \coop{2} \F \coop{0} \X (\neg p_0 \lor \coop{1} \G ($ $\neg p_1 \lor \coop{0,1} \G (\coop{0} \F \neg p_0)))))))$.
The subsequent formulae are similar but longer.

It is easy to observe that while scaling the depth of the formulae, the computation time of MonoSatATL grows very slowly,
almost imperceptibly, contrary to TATL for which it increases significantly.


%
%

\section{Conclusions}
\label{section:conclusions}



The paper introduced a new method exploiting SMMT solvers for (bounded) testing of ATL satisfiability
and for constructing (in many cases minimal) ATL models.
Despite the fact that we apply the method to a restricted class of models for ATL under the standard semantics,
our method can be adapted to other classes of multi-agent systems as well as to other ATL semantics including
imperfect information.
Although our implementation is rather at the preliminary stage, the experimental results show a high potential for this approach.





\bibliographystyle{ACM-Reference-Format}  
\bibliography{lit-12-02-2019,bib,wojtek,wojtek-own}  

\end{document}